\documentclass{amsart}
\usepackage{amsmath,amssymb,amsthm,amsfonts}
\usepackage{graphicx}
\usepackage{tikz}
\usepackage{geometry}
\usepackage{hyperref}
\usepackage{enumitem}
\usepackage{booktabs}
\usepackage{float}

\graphicspath{{./}}
\hypersetup{hidelinks}
\newtheorem{theorem}{Theorem}[section]
\newtheorem{lemma}[theorem]{Lemma}
\newtheorem{proposition}[theorem]{Proposition}
\newtheorem{corollary}[theorem]{Corollary}
\newtheorem{assumption}[theorem]{Assumption}
\theoremstyle{definition}
\newtheorem{definition}[theorem]{Definition}

\theoremstyle{remark}
\newtheorem{remark}[theorem]{Remark}

\input{tcilatex}

\begin{document}
\title[Exact two-sided p-values in NEFs]{Exact two-sided p-values in natural
exponential families: coincidence, non-uniqueness, and sample-size stability}
\author[S. K. Bar-Lev]{Shaul K. Bar-Lev}
\address{Faculty of Industrial Engineering and Technology Management,
HIT--Holon Institute of Technology, Holon, Israel}
\author[L. Hoessly]{Linard Hoessly}
\address{Data Center of the Swiss Transplant Cohort Study, University
Hospital Basel, Basel, 4031 Switzerland}
\email{linard.hoessly@hotmail.com}

\begin{abstract}
We study the non-uniqueness of exact two-sided $p$-values in continuous
one-parameter natural exponential families (NEFs). For directed one-sided
problems, the usual tail $p$-value agrees with the $p$-values obtained by
inverting UMP, UMPU, and likelihood-ratio (LR) tests. For a two-sided simple
null, we distinguish four constructions: equal-tail, density-ordered, UMPU,
and LR $p$-values. At a fixed null parameter, UMPU and equal-tail $p$-values
coincide if and only if the null law is symmetric about its mean; under a
regular two-branch density-level condition, the same fixed-null symmetry
characterization holds for UMPU versus density ordering and equal-tail
versus density ordering. Requiring any of these coincidences throughout the
NEF characterizes the Gaussian family. We combine these results with the
theorem of Bar-Lev, Bshouty and Letac that UMPU and LR $p$-values coincide
throughout a continuous NEF precisely for the normal, gamma and
inverse-Gaussian families.

We also investigate the two LR pairings not covered by those results. If
equal-tail and LR $p$-values coincide throughout a NEF satisfying our
standing regularity assumptions, then $(V^{2/3})^{\prime \prime \prime }=0$
on the mean domain; equivalently, $V^{2/3}$ is a quadratic polynomial. The
same coincidence also forces an explicit density-at-the-mean identity.
Within the power-variance class, these two necessary conditions, together
with the Lebesgue-dominated assumption, leave only the Gaussian family. For
an i.i.d.\ sample with canonical sufficient statistic $T_n=\sum_{i=1}^nX_i$,
persistence of equal-tail--LR coincidence throughout the family along an
unbounded sequence of sample sizes forces Gaussianity. A corresponding
density--LR statement is given conditionally on an explicitly stated
differentiated local Edgeworth expansion; we do not disguise that additional
asymptotic hypothesis as a consequence of the basic NEF assumptions.
Finally, inverse-Gaussian and hyperbolic-secant examples quantify numerical $%
p$-value differences, disagreement of rejection decisions, and differences
in power.
\end{abstract}

\keywords{p-value; two-sided testing; natural exponential family; uniformly
most powerful unbiased test; likelihood-ratio test; equal-tail p-value;
density-ordered p-value}
\maketitle
\tableofcontents

\section{Introduction}

{\ $p$-values are among the most widely used numerical summaries in
statistical inference. Their interpretation, limitations and practical use
have been discussed extensively; see, for example, the ASA statement of
Wasserstein and Lazar \cite{Wasserstein2016} and the subsequent discussion
in Wasserstein, Schirm and Lazar \cite{Wasserstein2019}. The present paper
concerns a narrower mathematical issue. Even when the null hypothesis is
simple, the model is continuous, and the resulting $p$-value is exact, a
two-sided $p$-value need not be a unique numerical object. }

For a one-sided problem, ``more extreme'' has a natural direction. In a
one-parameter exponential family with monotone likelihood ratio, this
direction is shared by the usual tail ordering, the UMP test and the
likelihood-ratio ordering. For a two-sided problem, especially under an
asymmetric null law, there is no single canonical way to order observations
on the two sides of the center. One may balance null tail probabilities,
order observations by null density, invert the UMPU tests, or order them by
generalized likelihood ratio. Each construction is mathematically natural,
but the induced rejection regions need not agree.

The ambiguity of two-sided $p$-values under asymmetry has a substantial
history. Gibbons and Pratt \cite{GibbonsPratt1975} discussed doubled-tail
and minimum-likelihood principles; George and Mudholkar \cite%
{GeorgeMudholkar1990} treated two-sided $p$-values for strictly unimodal
asymmetric null densities; Dunne, Pawitan and Doody \cite%
{DunnePawitanDoody1996} developed UMPU-based two-sided $p$-values for
discrete asymmetric models; and Mudholkar and Chaubey \cite%
{MudholkarChaubey2009} formulated an axiomatic link between valid $p$-values
and level-$\alpha$ tests. Berger and Delampady \cite{BergerDelampady1987}
and Kulinskaya \cite{Kulinskaya2008} provide further perspectives. Our
purpose is not to replace these constructions, but to place equal-tail,
density, UMPU and LR orderings in a common continuous-NEF framework and
determine when their induced exact $p$-value functions agree.

A central point of departure is the theorem of Bar-Lev, Bshouty and Letac 
\cite{BarLevBshoutyLetac2002}: under their regularity conditions, the
continuous NEFs for which the bilateral UMPU and generalized
likelihood-ratio tests coincide for every null parameter and every level are
precisely the normal, gamma and inverse-Gaussian families. That theorem is
an essential known ingredient rather than a new contribution of the present
paper. Our new fixed-null results concern the equal-tail and density
orderings, while our more structural new results concern the previously
untreated ET--LR edge and persistence under increasing sample size.

The equal-tail--LR edge is especially delicate. Equality of these two $p$%
-values at a null parameter forces the null mean to be a median, because the
LR $p$-value equals one at the null mean whereas the equal-tail $p$-value
equals one only at a median. This connects the problem with a difficult
characterization question. Letac, Mattner and Piccioni \cite%
{LetacMattnerPiccioni2018} proved a Gaussian characterization when the
natural parameter itself is a median of every exponential tilt and
explicitly isolated the corresponding mean--median question. Piccioni,
Kolodziejek and Letac \cite{PiccioniKolodziejekLetac2020} obtained related
fixed-quantile location and scale characterizations. We therefore do not
assert an unrestricted one-observation Gaussian theorem from a short
symmetry argument. Instead, we derive a new necessary variance-function
equation, solve the resulting problem in the power-variance class, and then
obtain a complete Gaussian characterization under sample-size stability.

It is important to distinguish our ET--LR differential equation from the
differential equations in Bar-Lev, Bshouty and Letac \cite%
{BarLevBshoutyLetac2002}. Their UMPU--GLR analysis ultimately yields the
factorized variance-function condition 
\begin{equation*}
V^{\prime }\bigl(3VV^{\prime \prime }-2(V^{\prime })^2\bigr) \bigl(%
2VV^{\prime \prime }-(V^{\prime })^2\bigr)=0,
\end{equation*}
and they also record the third-order equation 
\begin{equation*}
2(V^{\prime })^3-4VV^{\prime }V^{\prime \prime }+3V^2V^{\prime \prime \prime
}=0
\end{equation*}
arising from another intermediate relation. By contrast, ET--LR coincidence
below yields 
\begin{equation*}
4(V^{\prime })^3-9VV^{\prime }V^{\prime \prime }+9V^2V^{\prime \prime \prime
}=0,
\end{equation*}
equivalently $(V^{2/3})^{\prime \prime \prime }=0$. Thus the new equation
comes from a different coincidence relation and is not the UMPU--GLR
equation rewritten.

The contributions are therefore best separated by novelty. First, we give a
common exact-$p$-value framework and a direct nesting proof for the
continuous two-sided UMPU critical regions. Second, at a fixed null
parameter we prove that UMPU--ET coincidence is exactly symmetry about the
mean; under a regular two-branch density-level geometry, the analogous
statements hold for UMPU--density and ET--density. Requiring these
fixed-null symmetries throughout the NEF yields Gaussianity. Third, for
ET--LR coincidence throughout a sufficiently smooth NEF we derive the
variance-function restriction 
\begin{equation*}
9V^2V^{\prime \prime \prime }-9VV^{\prime }V^{\prime \prime }+4(V^{\prime
})^3=0,
\end{equation*}
or $(V^{2/3})^{\prime \prime \prime }=0$, together with the additional
density-at-the-mean identity in Proposition~\ref{prop:h-from-V}. These two
necessary conditions yield a complete Gaussian conclusion within the
continuous power-variance class. Fourth, for the canonical sufficient
statistic $T_n$ of an i.i.d.\ sample, we show that ET--LR coincidence
persisting along any unbounded sequence of sample sizes forces Gaussianity.
The density--LR analogue is stated only under an explicit differentiated
local Edgeworth condition. Fifth, we quantify the practical size of the
non-uniqueness in inverse-Gaussian and hyperbolic-secant examples.

The generic discrepancy bounds for two exact continuous $p$-values are
retained only as a calibration benchmark. They use uniform null marginals
rather than NEF structure; the distinctive mathematical content of the paper
lies in the coincidence characterizations, the ET--LR variance restriction,
and the sample-size results.

The paper is organized as follows. Section~2 defines the four $p$-value
constructions and records the NEF preliminaries. Section~3 gives the
fixed-null and global coincidence results, the finite-sample ET--LR
variance-function restriction, and the generic calibration benchmark.
Section~4 extends the theory to arbitrary i.i.d.\ sample size through the
canonical sufficient statistic. Section~5 gives explicit inverse-Gaussian
and hyperbolic-secant examples and decision-level comparisons. Section~6
discusses scope and open questions. The appendices contain the symmetry and
UMPU-nesting arguments, the positive-stable background, and the derivative
calculation underlying the ET--LR variance-function equation.

\subsection*{Acknowledgements}

We thank Samuel Pawel, Ioan Manolescu, Michael Dougoud, and Lucia de Andres
Bragado for helpful discussions and feedback.

\subsection*{AI use}
During the preparation of this manuscript, we used  GPT-4o
 for language edits as well as for latex support, and elicit.com for literature search. After
using them, we reviewed and edited the content as needed
and take full responsibility for its content.
\section{Definitions and preliminaries}

\subsection{Exact $p$-values from nested tests}

{\ Let $T=T(X)$ be a test statistic. A nested family of rejection regions $%
\{R_\alpha:0<\alpha<1\}$ satisfies 
\begin{equation*}
R_{\alpha_1}\subseteq R_{\alpha_2}\qquad (\alpha_1<\alpha_2).
\end{equation*}
If, for a null parameter set $\Theta_0$, 
\begin{equation*}
\sup_{\theta\in\Theta_0}\mathbb{P}_\theta(T\in R_\alpha)\leq\alpha,
\end{equation*}
the associated $p$-value is }

\begin{equation}  \label{eq:p-inversion}
{p(x)=\inf\big(\{\alpha\in(0,1):T(x)\in R_\alpha\}\cup\{1\}\big).}
\end{equation}

It is exact when 
\begin{equation*}
\sup_{\theta\in\Theta_0}\mathbb{P}_\theta\{p(X)\leq u\}=u, \qquad 0<u<1.
\end{equation*}
For a simple null $\Theta_0=\{\theta_0\}$ and a continuous $p(X)$, exactness
is equivalent to 
\begin{equation*}
p(X)\sim \mathrm{Unif}(0,1)\qquad\hbox{under }\theta_0.
\end{equation*}
See Lehmann and Romano \cite{LehmannRomano2022} for the standard
test-inversion framework. 

For directed problems we use $p_{\mathrm{us}}$ for the ordinary tail $p$%
-value. For two-sided problems we do not use the phrase ``usual $p$-value'',
because without an ordering convention the phrase ``as or more extreme'' is
not mathematically determinate under an asymmetric null. {\ For a
simple-versus-simple problem $H_0:\theta=\theta_0$ versus $%
H_1:\theta=\theta_1$, ``LR'' means the ordinary Neyman--Pearson likelihood
ratio $f_{\theta_0}(x)/f_{\theta_1}(x)$. For composite-null problems, and
for the two-sided simple-null problem that is the main subject of this
paper, we use the standard generalized likelihood-ratio statistic 
\begin{equation}  \label{eq:birkes}
\lambda_{\mathrm{GLR}}(x)= \frac{\sup_{\theta\in\Theta_0}f_\theta(x)} {%
\sup_{\theta\in\Theta}f_\theta(x)},
\end{equation}
with small values providing evidence against $H_0$. Thus the two directed
problems in Proposition~\ref{prop:one-sided} use their appropriate LR/GLR
form, while for $H_0:\theta=\theta_0$ versus $H_1:\theta\ne\theta_0$ the
statistic in \eqref{eq:birkes} is exactly the GLR ordering used throughout
the rest of the paper. Under monotone likelihood ratio, these directed
LR/GLR orderings are the same tail ordering; compare Birkes \cite{Birkes1990}%
.}

\begin{proposition}
\label{prop:one-sided} Let $\{f_\theta:\theta\in\Theta\}$ be a continuous
regular one-parameter exponential family with monotone likelihood ratio in
its canonical statistic $T$. Consider either 
\begin{equation*}
H_0:\theta=\theta_0\quad\hbox{versus}\quad H_1:\theta=\theta_1>\theta_0,
\end{equation*}
or 
\begin{equation*}
H_0:\theta\leq\theta_0\quad\hbox{versus}\quad H_1:\theta>\theta_0.
\end{equation*}
Then the directed tail, UMP, UMPU and likelihood-ratio $p$-values exist and
coincide: 
\begin{equation*}
p_{\mathrm{us}}=p_{\mathrm{UMP}}=p_{\mathrm{UMPU}}=p_{\mathrm{LR}}.
\end{equation*}
The analogous statement holds with all inequalities reversed.
\end{proposition}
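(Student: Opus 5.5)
The plan is to show that all four constructions produce the same nested family of rejection regions, namely the upper tails $R_\alpha=\{t:t\ge c_\alpha\}$ with $\mathbb{P}_{\theta_0}(T\ge c_\alpha)=\alpha$. Once this is established, the inversion formula \eqref{eq:p-inversion} yields, for every construction, $p(x)=\mathbb{P}_{\theta_0}(T\ge T(x))=p_{\mathrm{us}}(x)$. Continuity of the family makes $F_{\theta_0}$ continuous, so $c_\alpha$ exists for every $\alpha\in(0,1)$. On the interior of the support $F_{\theta_0}$ is strictly increasing, so the tails are strictly nested in $\alpha$ and the infimum in \eqref{eq:p-inversion} is attained exactly at $\alpha=1-F_{\theta_0}(T(x))$.

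The first step treats the simple-versus-simple problem. Write $f_\theta(x)=h(x)\exp(\theta T(x)-k(\theta))$. Then $f_{\theta_1}/f_{\theta_0}=\exp((\theta_1-\theta_0)T-(k(\theta_1)-k(\theta_0)))$ is strictly increasing in $T$ because $\theta_1>\theta_0$. Small values of the likelihood ratio $f_{\theta_0}/f_{\theta_1}$ are therefore exactly large values of $T$. By the Neyman--Pearson lemma, with no randomization needed by continuity, the most powerful level-$\alpha$ test is the upper tail test. This region does not depend on $\theta_1$, so it is UMP. Since a UMP test is unbiased (it has power at least that of the trivial test $\phi\equiv\alpha$), it is also UMPU. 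Up to null sets, the UMPU test coincides with the UMP test, because any unbiased level-$\alpha$ test is dominated in power by the UMP one, and the uniqueness part of Neyman--Pearson forces the regions to agree. Hence $p_{\mathrm{UMP}}=p_{\mathrm{UMPU}}=p_{\mathrm{LR}}=p_{\mathrm{us}}$.

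The second step treats the composite null $\theta\le\theta_0$. By the Karlin--Rubin theorem, or by direct differentiation in $\theta$ for an MLR family, $\theta\mapsto\mathbb{P}_\theta(T\ge c)$ is nondecreasing. The supremum over $\Theta_0$ is therefore attained at $\theta_0$, and the same tail test is UMP and, as before, UMPU. For the GLR \eqref{eq:birkes}, let $\hat\theta(x)=(k')^{-1}(T(x))$ be the unrestricted MLE, taken on the interior of the mean domain. The log-likelihood is concave in $\theta$. If $\hat\theta\le\theta_0$, then $\lambda_{\mathrm{GLR}}=1$. Otherwise
\[
-\log\lambda_{\mathrm{GLR}}=\sup_\theta\{\theta T-k(\theta)\}-(\theta_0T-k(\theta_0))=k^*(T)-\theta_0T+k(\theta_0),
\]
where $k^*$ is the Legendre transform. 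This quantity has derivative $\hat\theta(T)-\theta_0>0$ in $T$, so it is strictly increasing on $\{T>k'(\theta_0)\}$. The sets $\{\lambda_{\mathrm{GLR}}\le c\}$ with $c<1$ are therefore upper tails of $T$. The main obstacle is the set of $x$ on which $\lambda_{\mathrm{GLR}}=1$, i.e.\ $T\le k'(\theta_0)$. There the GLR ordering is flat, which gives an atom of the induced ordering. I would handle it by restricting the level $\alpha$ to the range $\mathbb{P}_{\theta_0}(T>k'(\theta_0))$ where the GLR regions are genuine tails, and then checking that both $p_{\mathrm{LR}}$ and $p_{\mathrm{us}}$ are obtained from the same tails. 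The natural convention of refining ties by $T$, which preserves nestedness, makes the two $p$-values agree pointwise. I would state this tie-breaking convention explicitly, since without it the LR $p$-value would equal $1$ on $\{T\le k'(\theta_0)\}$ and not the tail probability there. A boundary issue also arises if $T(x)$ lies outside the interior of the mean domain, so that no MLE exists; there I would take the supremum as a limit, which preserves monotonicity.

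Finally, the reversed case follows by applying the result to $-T$ and the parameter $-\theta$. This reparametrization is again an MLR exponential family, and it swaps the two tails.
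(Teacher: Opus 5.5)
Your argument is the paper's proof written out in full: Neyman--Pearson and Karlin--Rubin give upper tails in $T$, the GLR statistic is a monotone function of $T$, the UMP test is unbiased and hence UMPU, and the reversed case follows by reflection. Your extra use of Neyman--Pearson uniqueness to identify the UMPU region is a detail the paper leaves implicit.

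Your composite-null observation is correct and exposes an imprecision in the statement that the paper's one-line claim (the one-sided GLR region ``has the same ordering'') passes over. Since $\lambda_{\mathrm{GLR}}\equiv 1$ on $\{T\le k'(\theta_0)\}$, inverting the GLR regions gives $p_{\mathrm{LR}}=1$ there. So $p_{\mathrm{LR}}=p_{\mathrm{us}}$ holds only on $\{T>k'(\theta_0)\}$, unless ties are broken by $T$ as you propose.
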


\begin{proof}
Monotone likelihood ratio implies that the Neyman--Pearson rejection region
in the simple-versus-simple problem, and the UMP rejection region in the
one-sided composite problem, is an upper tail in $T$; see \cite%
{LehmannRomano2022}. The one-sided generalized likelihood-ratio region has
the same ordering. Inverting these nested upper-tail tests gives the null
tail probability. The UMP test is unbiased against the indicated one-sided
alternative, hence it is also UMP within the unbiased class. The lower-tail
case is identical after reversing the ordering.
\end{proof}

\subsection{Two-sided equal-tail and density-ordered $p$-values}

{\ Fix a simple null $H_0:\theta=\theta_0$. Assume $T$ has a continuous
strictly increasing null CDF $F_0$, density $f_0$, and quantile function $%
q_u=F_0^{-1}(u)$. }

\begin{definition}[Equal-tail $p$-value]
\label{def:ET} For $0<\alpha<1$, define 
\begin{equation*}
R^{\mathrm{ET}}_\alpha=(-\infty,q_{\alpha/2}]\cup[q_{1-\alpha/2},\infty).
\end{equation*}
The equal-tail $p$-value is the inversion of these regions. Equivalently, 
\begin{equation}  \label{eq:ET}
p_{\mathrm{ET}}(x)=2\min\{F_0(T(x)),1-F_0(T(x))\}.
\end{equation}
It is exact under $H_0$.
\end{definition}

\begin{definition}[Density-ordered $p$-value]
\label{def:dens} Assume $f_0(T)$ has a continuous distribution under $H_0$.
Define 
\begin{equation}  \label{eq:dens}
p_{\mathrm{dens}}(x) =\mathbb{P}_{\theta_0}\{f_0(T(X))\leq f_0(T(x))\}.
\end{equation}
Equivalently, observations are ordered by the null surprisal $-\log f_0(T)$.
This is the density or minimum-likelihood ordering discussed in the
precise-hypothesis literature; compare \cite%
{BergerDelampady1987,Kulinskaya2008}.
\end{definition}

\begin{lemma}
\label{lem:dens-exact} Under the assumptions of Definition~\ref{def:dens}, $%
p_{\mathrm{dens}}(X)\sim\mathrm{Unif}(0,1)$ under $H_0$.
\end{lemma}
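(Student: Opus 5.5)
The plan is to reduce the claim to the probability integral transform applied to the scalar random variable $Y=f_0(T(X))$. Let $G(y)=\mathbb{P}_{\theta_0}\{Y\le y\}$ be its null distribution function. By definition \eqref{eq:dens}, $p_{\mathrm{dens}}(x)=G(f_0(T(x)))$, so that $p_{\mathrm{dens}}(X)=G(Y)$. Here $Y$ is a genuine random variable: $f_0$ is a Borel density and $T$ is measurable, so $Y$ is measurable.

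The key step is the standard fact that if $Y$ has a continuous distribution function $G$, then $G(Y)\sim\mathrm{Unif}(0,1)$. The hypothesis of Definition~\ref{def:dens} is exactly that $Y$ has a continuous distribution under $H_0$, meaning no atoms and hence continuous $G$. I would prove the fact directly, since $G$ need not be strictly increasing: the null density level sets may have flat stretches in the range of $f_0$.

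Fix $u\in(0,1)$ and set $y_u=\sup\{y: G(y)\le u\}$. This is finite because $G(y)\to0$ as $y\to-\infty$ and $G(y)\to1$ as $y\to\infty$, with $u\in(0,1)$. By right-continuity together with continuity, $G(y_u)=u$. Since $G$ is nondecreasing, $\{G(Y)\le u\}=\{Y\le y_u\}\cup\{Y>y_u,\ G(Y)\le u\}$. The second set is empty: $y>y_u$ forces $G(y)>u$ by definition of the supremum. Hence
\[
\mathbb{P}\{G(Y)\le u\}=\mathbb{P}\{Y\le y_u\}=G(y_u)=u .
\]
Letting $u\to 0,1$ handles the endpoints, so $G(Y)$ is uniform.

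The only delicate point is this treatment of possible flat pieces of $G$. It is exactly where continuity, rather than strict monotonicity, is used, and the argument above handles it without extra assumptions. I would close by noting, as in the ET case, that this gives exactness in the sense of Section~2.1, since the null is simple.
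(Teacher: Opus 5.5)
Your proposal is correct and follows the paper's proof: set $Y=f_0(T(X))$, note that $p_{\mathrm{dens}}(X)=G(Y)$, and apply the probability integral transform for a continuous $G$. The only difference is that you also prove the integral transform directly in the case where $G$ may have flat pieces. The paper cites that step as standard.
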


\begin{proof}
Let $Y=f_0(T(X))$ and $G(y)=\mathbb{P}_0(Y\leq y)$. Then $p_{\mathrm{dens}%
}(X)=G(Y)$, which is uniform by the probability integral transform because $G
$ is continuous.
\end{proof}

\subsection{Natural exponential families}

{\ We use standard NEF notation; see Barndorff-Nielsen \cite%
{BarndorffNielsen1978}, Letac and Mora \cite{LetacMora1990}, and Bar-Lev and
Kokonendji \cite{BarLevKokonendji2017}. Let $\mu$ be a non-Dirac positive
Radon measure on $\mathbb{R}$ and define 
\begin{equation*}
L(\theta)=\int_{\mathbb{R}}e^{\theta x}\mu(dx), \qquad D=\{\theta\in\mathbb{R%
}:L(\theta)<\infty\}.
\end{equation*}
Assume 
\begin{equation*}
\Theta=\mathrm{int}\,D\neq\varnothing.
\end{equation*}
The cumulant transform is 
\begin{equation*}
k(\theta)=\log L(\theta),\qquad \theta\in\Theta,
\end{equation*}
and the generated NEF is 
\begin{equation}  \label{eq:NEF}
F(\mu)=\{P_\theta(dx)=e^{\theta x-k(\theta)}\mu(dx):\theta\in\Theta\}.
\end{equation}
On $\Theta$, $k$ is strictly convex and real analytic. Its first two
derivatives are 
\begin{equation*}
m(\theta)=k^{\prime }(\theta)=\mathbb{E}_\theta X, \qquad k^{\prime \prime
}(\theta)=\mathrm{Var}_\theta(X)>0.
\end{equation*}
The mean domain is 
\begin{equation*}
M=k^{\prime }(\Theta).
\end{equation*}
Writing $\psi=(k^{\prime})^{-1}:M\to\Theta$, the variance function is 
\begin{equation}  \label{eq:VF}
V(m)=k^{\prime \prime }(\psi(m))=\frac{1}{\psi^{\prime }(m)}.
\end{equation}
If $C$ denotes the convex support of $\mu$, the NEF is steep when 
\begin{equation*}
M=\mathrm{int}\,C.
\end{equation*}
This formulation is preferable here to requiring the effective domain $D$
itself to be open. For example, for a positive stable generating law one has 
$D=(-\infty,0]$ and $\Theta=(-\infty,0)$. }

\begin{assumption}
\label{ass:main} The NEF $F(\mu)$ is steep. The generating measure is
Lebesgue dominated, 
\begin{equation*}
\mu(dx)=h(x)\,dx,
\end{equation*}
where $h$ is strictly positive and twice continuously differentiable on an
open connected support interval $S$. Thus $S=\mathrm{int}\,C=M$.
\end{assumption}

{\ For $\theta_0\in\Theta$, write $m_0=k^{\prime }(\theta_0)$. Since the
family is steep and $M=S$, the unrestricted one-observation MLE at $x\in S$
is $\widehat\theta=\psi(x)$. The log generalized likelihood ratio for
testing $\theta=\theta_0$ is, apart from a monotone transformation, 
\begin{equation}  \label{eq:logLR}
r_{\theta_0}(x) =(\theta_0-\psi(x))x-k(\theta_0)+k(\psi(x)).
\end{equation}
Differentiation gives 
\begin{equation*}
r^{\prime }_{\theta_0}(x)=\theta_0-\psi(x), \qquad r^{\prime \prime
}_{\theta_0}(x)=-\psi^{\prime }(x)<0.
\end{equation*}
Hence $r_{\theta_0}$ is strictly concave, with its unique maximum at $x=m_0$%
. Its lower level sets are two-tailed nested rejection regions. }

The two-sided UMPU level-$\alpha$ test has a rejection region 
\begin{equation*}
R^{\mathrm{UMPU}}_\alpha=(-\infty,c_1(\alpha)]\cup[c_2(\alpha),\infty),
\qquad c_1(\alpha)<m_0<c_2(\alpha),
\end{equation*}
where the critical points solve the size and first-moment equations 
\begin{align}
F_0(c_1)+1-F_0(c_2)&=\alpha,  \label{eq:size} \\
\int_{-\infty}^{c_1}(x-m_0)f_0(x)\,dx
+\int_{c_2}^{\infty}(x-m_0)f_0(x)\,dx&=0.  \label{eq:unb}
\end{align}
For the one-parameter exponential-family problem $H_0:\theta=\theta_0$
against $\theta\neq\theta_0$, the standard UMPU theorem states that the
two-tail test satisfying \eqref{eq:size}--\eqref{eq:unb} is UMPU; see
Lehmann and Romano \cite[Section~4.2, especially equations (4.5)--(4.6)]%
{LehmannRomano2022}. Thus \eqref{eq:unb} is not being used as though local
unbiasedness by itself implied global unbiasedness. Appendix~A proves
directly that these critical regions exist uniquely and are nested in $%
\alpha $.

\begin{lemma}[Exactness of the four two-sided constructions]
\label{lem:four-exact} Under Assumption~\ref{ass:main}, the equal-tail, UMPU
and LR inversions are exact under every simple null. Whenever Definition~\ref%
{def:dens} applies, the density-ordered $p$-value is exact as well.
\end{lemma}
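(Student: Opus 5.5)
The plan is to reduce every case to one principle. Suppose $p(x)=\phi(S(x))$, where $S$ is a real statistic whose null CDF $G$ is continuous and $\phi$ is the appropriate tail function of $S$. Then $p(X)=G(S(X))$ (or $1-G(S(X))$ up to a null set), and this is $\mathrm{Unif}(0,1)$ under $\theta_0$ by the probability integral transform. Throughout, note that Assumption~\ref{ass:main} makes $P_{\theta_0}$ absolutely continuous with density $f_0(x)=h(x)e^{\theta_0x-k(\theta_0)}>0$ on the interval $S$. Its CDF $F_0$ is therefore continuous and strictly increasing on $S$, so the standing hypotheses of Definition~\ref{def:ET} hold for $T(X)=X$.

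\emph{Equal-tail.} Write $U=F_0(X)$, which is uniform. Then $p_{\mathrm{ET}}(X)=2\min\{U,1-U\}$, and for $0<u<1$ we have $\mathbb{P}(2\min\{U,1-U\}\le u)=\mathbb{P}(U\le u/2)+\mathbb{P}(U\ge 1-u/2)=u$. I would also check briefly that the inversion formula \eqref{eq:p-inversion} applied to $R^{\mathrm{ET}}_\alpha$ returns \eqref{eq:ET}. This holds because the regions are nested and closed, and $x\in R^{\mathrm{ET}}_\alpha$ iff $\alpha\ge 2\min\{F_0(x),1-F_0(x)\}$.

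\emph{LR.} By the computation after \eqref{eq:logLR}, $r_{\theta_0}$ is strictly concave on $S$ with maximum at $m_0$. So for each level $c$ the set $\{r_{\theta_0}\ge c\}$ is an interval, and the level set $\{r_{\theta_0}=c\}$ has at most two points, hence null probability zero. Consequently $Y=-r_{\theta_0}(X)$ has a continuous null CDF. The LR regions are the lower level sets of the likelihood ratio, i.e.\ $\{Y\ge y\}$, and they are nested. Taking the size-$\alpha$ region gives $p_{\mathrm{LR}}(x)=\mathbb{P}_{\theta_0}(Y\ge Y(x))=1-G(Y(x))$ almost everywhere, and this is uniform.

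\emph{UMPU.} Here I would invoke Appendix~A: for each $\alpha$ there are unique $c_1(\alpha)<m_0<c_2(\alpha)$ solving \eqref{eq:size}--\eqref{eq:unb}, and the regions are nested in $\alpha$. From uniqueness and the implicit function theorem applied to \eqref{eq:size}--\eqref{eq:unb}, I would also extract that $c_1(\alpha)$ is continuous and strictly increasing, $c_2(\alpha)$ is continuous and strictly decreasing, and $c_1\to\inf S$, $c_2\to\sup S$ as $\alpha\to0$. With these properties, every $x\ne m_0$ lies on the boundary of exactly one region $R^{\mathrm{UMPU}}_{\alpha(x)}$. Then $\{p_{\mathrm{UMPU}}\le u\}=R^{\mathrm{UMPU}}_u$, which has probability $u$ by \eqref{eq:size}.

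\emph{Density-ordered.} This is exactly Lemma~\ref{lem:dens-exact}.

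\emph{Main obstacle.} The delicate step is the UMPU continuity and monotonicity of $\alpha\mapsto(c_1,c_2)$. Nestedness alone does not exclude jumps, and a jump would break the identity $\{p\le u\}=R_u$. I would try to get it directly from Appendix~A by parametrizing via $c_1$. Equation \eqref{eq:unb} then determines $c_2$ as a strictly decreasing continuous function of $c_1$, since $(x-m_0)f_0(x)$ has a fixed sign on each side of $m_0$. Substituting this into \eqref{eq:size} gives $\alpha$ as a continuous strictly increasing function of $c_1$, so the map is a bijection onto $(0,1)$ and its inverse is continuous.
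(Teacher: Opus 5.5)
Your proposal is correct and follows essentially the same route as the paper. It uses the probability integral transform for the equal-tail and density-ordered $p$-values, a continuous null law for the log-LR statistic from strict concavity (level sets of at most two points), and the nested exact-size UMPU regions of Appendix~A. The continuity of $\alpha\mapsto(c_1(\alpha),c_2(\alpha))$ that you flag as the main obstacle is not actually needed: nestedness together with $\mathbb{P}_{\theta_0}(R_\alpha)=\alpha$ for every $\alpha$ already gives $R_u\subseteq\{p\le u\}\subseteq\bigcap_{\alpha>u}R_\alpha$, hence $\mathbb{P}_{\theta_0}\{p(X)\le u\}=u$, which is exactly the paper's short test-inversion step.
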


\begin{proof}
Exactness of $p_{\mathrm{ET}}$ follows directly from continuity of $F_0$,
and exactness of $p_{\mathrm{dens}}$ is Lemma~\ref{lem:dens-exact}. For
UMPU, Proposition~\ref{prop:nesting} gives nested rejection regions and %
\eqref{eq:size} gives null probability exactly $\alpha$ at every level. Test
inversion therefore gives 
\begin{equation*}
\mathbb{P}_{\theta_0}\{p_{\mathrm{UMPU}}(X)\leq\alpha\}=\alpha.
\end{equation*}

For LR, let 
\begin{equation*}
H_{\theta_0}(s)=\mathbb{P}_{\theta_0}\{r_{\theta_0}(X)\leq s\}.
\end{equation*}
Strict concavity of $r_{\theta_0}$ implies strict monotonicity on either
side of $m_0$. Since $f_0$ is positive and continuous on the connected
support, every nonmaximal level set of $r_{\theta_0}$ consists of at most
two points and has null probability zero. Moreover, every nonempty interval
of values in the interior of the range of $r_{\theta_0}$ has a preimage
containing a nonempty interval on at least one branch, hence positive
probability. Thus $H_{\theta_0}$ is continuous and strictly increasing on
the interior of the range, with endpoint values $0$ and $1$. The LR
inversion can be written 
\begin{equation*}
p_{\mathrm{LR}}(x)=H_{\theta_0}(r_{\theta_0}(x)).
\end{equation*}
The probability integral transform therefore gives $p_{\mathrm{LR}}(X)\sim%
\mathrm{Unif}(0,1)$.
\end{proof}

\begin{remark}[Meaning of affine equivalence]
\label{rem:affine} 
Let
\[
Y=aX+b,\qquad a\neq 0,
\]
and let $F(\nu)$ be the NEF obtained from $F(\mu)$ by this affine transformation. If
\[
X\sim P^\mu_\theta,
\]
then
\[
Y\sim P^\nu_{\theta/a}.
\]

The mean domain and variance function transform as
\[
M_\nu=aM_\mu+b,
\qquad
V_\nu(y)
=
a^2V_\mu\!\left(\frac{y-b}{a}\right).
\]
Hence Assumptions 2.5 and 3.1 are preserved.

The corresponding densities satisfy
\[
f^\nu_{\theta/a}(ax+b)
=
\frac{1}{|a|}f^\mu_\theta(x).
\]
Therefore affine transformations preserve all four $p$-value constructions. Density ordering is unchanged because all density values are multiplied by the same positive constant; the generalized likelihood ratio is unchanged because this constant cancels; equal-tail ordering is unchanged, with the two tails interchanged when $a<0$; and the UMPU size equation is unchanged while its centered first-moment equation is multiplied by $a$.

Thus, for
$
\bullet\in\{\mathrm{ET},\mathrm{dens},\mathrm{UMPU},\mathrm{LR}\}$,

\[
p^\nu_{\bullet,\theta/a}(ax+b)
=
p^\mu_{\bullet,\theta}(x).
\]

Consequently, any coincidence relation among the four $p$-values holds throughout $F(\mu)$ if and only if it holds throughout any affine transform of $F(\mu)$. This is what we mean by up to affine transformation.
\end{remark}
\section{Coincidence and non-coincidence of exact two-sided $p$-values}

\subsection{Fixed-null coincidence and global NEF consequences}

\begin{assumption}[Regular density-level geometry]
\label{ass:unimodal} For statements involving $p_{\mathrm{dens}}$, assume in
addition that for every $\theta\in\Theta$ the tilted density 
\begin{equation*}
f_\theta(x)=e^{\theta x-k(\theta)}h(x),\qquad x\in S,
\end{equation*}
has a unique interior mode $r_\theta$, satisfies 
\begin{equation*}
f_\theta^{\prime }(x)>0\quad(x<r_\theta), \qquad f_\theta^{\prime
}(x)<0\quad(x>r_\theta),
\end{equation*}
and tends to $0$ at the endpoints of $S$. Thus each nontrivial density level
has exactly two smooth branches, one on either side of the mode.
\end{assumption}

\begin{remark}
\label{rem:geom-assumption} Assumption~\ref{ass:unimodal} is a convenient
sufficient condition, not claimed to be minimal. The proofs use only the
corresponding geometry: every nontrivial density superlevel set is an
interval with two differentiable endpoints, and these intervals shrink to
the unique mode as the density level approaches its maximum. The derivative
sign condition is stated explicitly so that the inverse-function
differentiations are justified.
\end{remark}

\begin{theorem}[Fixed-null coincidence]
\label{thm:local} Suppose Assumption~\ref{ass:main} holds and fix $%
\theta_0\in\Theta$. All $p$-values in this theorem are computed under the
same simple null $H_0:\theta=\theta_0$, and equality means equality as
functions of the canonical observation.

\begin{enumerate}[ label=(\roman*)]

\item 
\begin{equation*}
p_{\mathrm{UMPU},\theta_0}=p_{\mathrm{ET},\theta_0}
\quad\Longleftrightarrow\quad P_{\theta_0}\text{ is symmetric about }%
m_0=k^{\prime }(\theta_0).
\end{equation*}
\end{enumerate}

If the density-level conditions of Assumption~\ref{ass:unimodal} hold for $%
f_{\theta_0}$, then

\begin{enumerate}[ label=(\roman*),resume]

\item 
\begin{equation*}
p_{\mathrm{UMPU},\theta_0}=p_{\mathrm{dens},\theta_0}
\quad\Longleftrightarrow\quad P_{\theta_0}\text{ is symmetric about }m_0;
\end{equation*}

\item 
\begin{equation*}
p_{\mathrm{ET},\theta_0}=p_{\mathrm{dens},\theta_0}
\quad\Longleftrightarrow\quad P_{\theta_0}\text{ is symmetric about }m_0.
\end{equation*}
\end{enumerate}
\end{theorem}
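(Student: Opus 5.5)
The plan is to compare the families of nested rejection regions rather than the $p$-value formulas themselves. Every construction here is the inversion of a nested family of two-tailed regions $(-\infty,a(\alpha)]\cup[b(\alpha),\infty)$ of exact null size $\alpha$; for density ordering this uses Assumption~\ref{ass:unimodal}. Such a family is determined by its $p$-value function. So two $p$-values coincide if and only if, for every $\alpha\in(0,1)$, the corresponding critical pairs $(a(\alpha),b(\alpha))$ coincide. Each construction adds exactly one extra scalar equation to the common size equation \eqref{eq:size}:
\begin{itemize}
\item[] ET requires $F_0(a)=1-F_0(b)=\alpha/2$;
\item[] UMPU requires \eqref{eq:unb};
\item[] density ordering requires $f_0(a)=f_0(b)$.
\end{itemize}
The reverse implications (symmetry $\Rightarrow$ coincidence) are routine. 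If $P_{\theta_0}$ is symmetric about $m_0$, the pair $a=m_0-t$, $b=m_0+t$ satisfies all three extra equations at once. Indeed, the equal tails follow from symmetry, \eqref{eq:unb} holds because the two tail integrals of the odd function $x-m_0$ cancel, and $f_0(m_0-t)=f_0(m_0+t)$. Uniqueness of the critical pair for each construction then gives equality of regions. For UMPU this uniqueness is Proposition~\ref{prop:nesting}; for ET and dens it follows from strict monotonicity of $F_0$ and of the two density branches.

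For the forward directions, I would parametrize by the lower critical point. In (i), ET regions that are UMPU give, for every $u\in(0,1/2)$ and with $a=q_u$, $b=q_{1-u}$,
\begin{equation*}
\int_{-\infty}^{q_u}(x-m_0)f_0\,dx+\int_{q_{1-u}}^{\infty}(x-m_0)f_0\,dx=0.
\end{equation*}
Differentiating in $u$, using $dq_u/du=1/f_0(q_u)$, gives $(q_u-m_0)-(q_{1-u}-m_0)\cdot(-1)=0$, i.e. $q_u+q_{1-u}=2m_0$ for all $u$. This is exactly symmetry of $P_{\theta_0}$ about $m_0$.

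In (ii) and (iii), write the density branches as $a(y)<r_{\theta_0}<b(y)$ with $f_0(a(y))=f_0(b(y))=y$. For (iii), ET$=$dens means $F_0(a(y))=1-F_0(b(y))$ for all $y$. Differentiating in $y$ gives $y\,a'(y)=-y\,b'(y)$, so $a+b$ is constant in $y$. Letting $y\uparrow\max f_0$, both branches shrink to the mode (Remark~\ref{rem:geom-assumption}), so $a+b=2r_{\theta_0}$. Hence $f_0$ is symmetric about the mode, and since the mean exists the centre of symmetry is $m_0$. For (ii), the unbiasedness identity along the branches, differentiated in $y$, gives
\begin{equation*}
(a-m_0)\,y\,a'+(b-m_0)\cdot(-y\,b')=0.
\end{equation*}
The size of the region is not constrained here, and since size is automatically exact, no further condition is needed. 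Combining this with the size-derivative relation is not directly available, so I would use it as follows. The identity reads $(a-m_0)\,da=(b-m_0)\,db$, so $(a-m_0)^2-(b-m_0)^2$ is constant along the level curve. At the mode the constant equals $0$, which gives $|a-m_0|=|b-m_0|$.

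The main obstacle is exactly this last step in (ii): concluding symmetry about $m_0$ from $|a-m_0|=|b-m_0|$. From $a<r_{\theta_0}<b$ and the relation $(a-m_0)^2=(b-m_0)^2$, the limit at the mode gives $r_{\theta_0}=m_0$, and then $a+b=2m_0$ everywhere. Before this, the sign of $a-m_0$ versus $b-m_0$ must be controlled near the mode. The differentiation in $y$ also requires the derivative-sign condition in Assumption~\ref{ass:unimodal}, which makes $a(y)$ and $b(y)$ $C^1$ on the open range of levels.
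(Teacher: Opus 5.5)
Your proposal is correct and follows essentially the paper's argument: the first-moment equation differentiated along $(q_u,q_{1-u})$ for (i), the invariant $(b-m_0)^2-(a-m_0)^2$ along density levels for (ii), and constancy of $a+b$ for (iii). In (i) you read the first-moment equation off the UMPU regions directly rather than passing through unbiasedness and the power-function derivative, and in (iii) the paper works with $q_p+q_{1-p}$ instead. The ``main obstacle'' you flag in (ii) is not one, and your stated order needs reversing: since $a(y)<b(y)$, the root $a-m_0=b-m_0$ is excluded, so $(a-m_0)^2=(b-m_0)^2$ gives $a+b=2m_0$ at every level, and only then does $y\uparrow\max f_0$ yield $r_{\theta_0}=m_0$ (the limit of the squared identity by itself is vacuous); this also makes the paper's preliminary conditional-mean step establishing $a(c)<m_0<b(c)$ unnecessary.
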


\begin{proof}
For (i), suppose first that $p_{\mathrm{UMPU},\theta_0}=p_{\mathrm{ET}%
,\theta_0}$. Equality of the $p$-value functions implies equality, up to
null boundaries, of their level-$\alpha$ rejection regions for every $\alpha$%
. Hence every equal-tail level-$\alpha$ test is UMPU and therefore unbiased.
Lemma~\ref{lem:ET-unbiased} implies that $P_{\theta_0}$ is symmetric about $%
m_0$. Conversely, if $P_{\theta_0}$ is symmetric about $m_0$, Lemma~\ref%
{lem:sym-UMPU} shows that the unique UMPU critical points are reflections of
one another about $m_0$. The two null tails are then equal, so the UMPU and
equal-tail regions, and hence their inversions, coincide.

For (ii), write $f=f_{\theta_0}$, $m=m_0$, and let $r$ be the unique mode.
For each $c\in(0,f(r))$, Assumption~\ref{ass:unimodal} gives unique points 
\begin{equation*}
a(c)<r<b(c),\qquad f(a(c))=f(b(c))=c.
\end{equation*}
Define 
\begin{equation*}
\alpha(c)=\mathbb{P}_{\theta_0}\{f(X)\leq c\} =F(a(c))+1-F(b(c)).
\end{equation*}
The two branch points vary continuously and strictly with $c$, so $\alpha(c)$
is continuous and strictly increasing from $0$ to $1$. Hence every
nontrivial density level corresponds to exactly one level $\alpha$ of the
density-ordered nested family. If $p_{\mathrm{UMPU},\theta_0}=p_{\mathrm{dens%
},\theta_0}$, the density-ordered region at level $\alpha(c)$ coincides with
the UMPU region at that level. The UMPU first-moment equation is therefore
equivalent to 
\begin{equation}  \label{eq:centralmoment-interval}
\int_{a(c)}^{b(c)}(x-m)f(x)\,dx=0 \qquad\hbox{for every }c\in(0,f(r)).
\end{equation}
Thus the conditional mean of $X$ given $a(c)<X<b(c)$ is $m$ for every $c$.
Since this conditional law is supported on $[a(c),b(c)]$, necessarily 
\begin{equation*}
a(c)<m<b(c).
\end{equation*}
As $c\uparrow f(r)$, both endpoints converge to $r$. The fixed point $m$
lies between them for every $c$, hence $m=r$.

By the inverse function theorem, 
\begin{equation*}
a^{\prime }(c)=\frac{1}{f^{\prime }(a(c))}, \qquad b^{\prime }(c)=\frac{1}{%
f^{\prime }(b(c))}.
\end{equation*}
Differentiating \eqref{eq:centralmoment-interval} and using $%
f(a(c))=f(b(c))=c$ yields 
\begin{equation*}
(b(c)-m)b^{\prime }(c)=(a(c)-m)a^{\prime }(c).
\end{equation*}
Consequently, 
\begin{equation*}
\frac{d}{dc}\{(b(c)-m)^2-(a(c)-m)^2\}=0.
\end{equation*}
Letting $c\uparrow f(m)$ shows that the constant is zero. Since $a(c)<m<b(c)$%
, 
\begin{equation*}
b(c)-m=m-a(c)
\end{equation*}
for every density level. Together with $f(a(c))=f(b(c))$, this proves
symmetry of $f$ about $m$.

Conversely, if $P_{\theta_0}$ is symmetric about $m_0$, Assumption~\ref%
{ass:unimodal} implies that each density level has reflected endpoints about 
$m_0$. Hence the density-ordered rejection regions are the equal-tail
regions; by part (i), these are also the UMPU regions.

For (iii), suppose $p_{\mathrm{ET},\theta_0}=p_{\mathrm{dens},\theta_0}$.
Equality of the level-$\alpha$ rejection regions implies, with $p=\alpha/2$, 
\begin{equation*}
f_0(q_p)=f_0(q_{1-p}),\qquad 0<p<\frac12.
\end{equation*}
Because $F_0$ is strictly increasing with positive differentiable density, 
\begin{equation*}
q^{\prime }_u=\frac{1}{f_0(q_u)}.
\end{equation*}
Therefore 
\begin{equation*}
\frac{d}{dp}\{q_p+q_{1-p}\} =\frac{1}{f_0(q_p)}-\frac{1}{f_0(q_{1-p})}=0.
\end{equation*}
Thus $q_p+q_{1-p}$ is constant. Letting $p\uparrow1/2$ gives 
\begin{equation*}
q_{1-p}=2q_{1/2}-q_p,
\end{equation*}
so $P_{\theta_0}$ is symmetric about its median. Its finite mean must equal
the symmetry center, hence the symmetry is about $m_0$. The converse follows
exactly as in part (ii).
\end{proof}

\begin{corollary}[Global Gaussian characterizations]
\label{cor:global} Suppose Assumption~\ref{ass:main} holds. Equality below
means equality of the corresponding $p$-value functions for every null
parameter $\theta_0\in\Theta$.

\begin{enumerate}[ label=(\roman*)]

\item $p_{\mathrm{UMPU}}=p_{\mathrm{ET}}$ throughout the NEF if and only if
the NEF is Gaussian up to affine transformation.
\end{enumerate}

If Assumption~\ref{ass:unimodal} holds for every $\theta\in\Theta$, then

\begin{enumerate}[ label=(\roman*),resume]

\item $p_{\mathrm{UMPU}}=p_{\mathrm{dens}}$ throughout the NEF if and only
if the NEF is Gaussian up to affine transformation;

\item $p_{\mathrm{ET}}=p_{\mathrm{dens}}$ throughout the NEF if and only if
the NEF is Gaussian up to affine transformation.
\end{enumerate}
\end{corollary}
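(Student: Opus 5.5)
By Theorem~\ref{thm:local}, each of the three global coincidences is equivalent to this: for every $\theta\in\Theta$, the tilted law $P_\theta$ is symmetric about its mean $m(\theta)=k'(\theta)$. For (ii) and (iii), the density-level hypothesis of Assumption~\ref{ass:unimodal} is available at every $\theta$, so the theorem applies at each null parameter. The corollary therefore reduces to one claim: a steep NEF satisfying Assumption~\ref{ass:main} in which every member is symmetric about its mean is Gaussian up to affine transformation. The converse direction is immediate. A Gaussian NEF (after an affine map, which by Remark~\ref{rem:affine} preserves all coincidences) has every member $N(m,\sigma^2)$ symmetric about $m$, and its densities satisfy Assumption~\ref{ass:unimodal}. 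Hence all three pairs of $p$-values coincide at every $\theta_0$ by the theorem.

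For the forward direction I would work with cumulants. Symmetry of $P_\theta$ about $m(\theta)$ forces all odd central moments to vanish. In particular the third cumulant vanishes, and the third cumulant of $P_\theta$ is $k'''(\theta)$. Symmetry gives $k'''(\theta)=0$ for every $\theta$ in the open interval $\Theta$. Then $k''$ is constant, say $\sigma^2>0$, and $V(m)\equiv\sigma^2$ on $M$. Since an NEF is determined by its variance function together with its mean domain (Letac--Mora), and steepness gives $M=\mathrm{int}\,C$, the constant variance function identifies the family. Concretely, $k(\theta)=\sigma^2\theta^2/2+a\theta+b$ on $\Theta$. The Laplace transform of $\mu$ then agrees on an open interval with that of $e^{b}N(a,\sigma^2)$. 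By analyticity and uniqueness of Laplace transforms, $\mu$ is a multiple of the $N(a,\sigma^2)$ law. Hence $D=\mathbb{R}$ and $F(\mu)$ is the Gaussian family, which is affine-equivalent to the standard normal NEF.

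The main point to check is that the cumulant identification is legitimate, that is, that $P_\theta$ has finite moments of order three. This holds because $\theta$ is interior to $D$, so $P_\theta$ has a moment generating function near zero. An alternative route avoids cumulants. Symmetry gives $\int e^{s(x-m(\theta))}P_\theta(dx)=\int e^{-s(x-m(\theta))}P_\theta(dx)$ for small $s$, which yields
\begin{equation*}
k(\theta+s)-k(\theta)-sk'(\theta)=k(\theta-s)-k(\theta)+sk'(\theta).
\end{equation*}
So $k(\theta+s)-k(\theta-s)=2sk'(\theta)$ for all $\theta$ and small $s$. Differentiating three times in $s$ at $s=0$ gives $2k'''(\theta)=0$. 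I expect no genuine obstacle. The only care needed is in quoting Theorem~\ref{thm:local} separately at each $\theta_0$, with the hypothesis of Assumption~\ref{ass:unimodal} available at each one for parts (ii) and (iii).
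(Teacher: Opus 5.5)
Your proposal is correct and follows the paper's proof. Theorem~\ref{thm:local}, applied at each $\theta_0$ (with Assumption~\ref{ass:unimodal} available at every tilt for (ii)--(iii)), gives symmetry of every $P_\theta$ about its mean, and the vanishing third cumulant $k'''\equiv 0$ then forces $k$ to be quadratic, exactly as in Lemma~\ref{lem:symmetric-NEF}; your Laplace-uniqueness identification of $\mu$ and the identity $k(\theta+s)-k(\theta-s)=2sk'(\theta)$ are only more explicit versions of that same final step.
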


\begin{proof}
By Theorem~\ref{thm:local}, any one of the stated global equalities forces
every $P_\theta$ to be symmetric about its mean. Lemma~\ref%
{lem:symmetric-NEF} then implies that $k$ is quadratic and the NEF is
Gaussian up to affine transformation. Conversely, every Gaussian tilt is
symmetric; its UMPU, equal-tail and density-ordered two-sided rejection
regions are the same symmetric tails.
\end{proof}

\begin{theorem}[Bar-Lev--Bshouty--Letac UMPU--LR characterization]
\label{thm:BLB} Under Assumption~\ref{ass:main}, 
\begin{equation*}
p_{\mathrm{UMPU}}=p_{\mathrm{LR}}\quad\hbox{throughout the NEF}
\end{equation*}
if and only if the NEF is, up to affine transformation, normal, gamma, or
inverse Gaussian.
\end{theorem}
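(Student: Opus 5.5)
The plan is to reduce the statement to the Bar-Lev--Bshouty--Letac theorem \cite{BarLevBshoutyLetac2002}, which the paper treats as a known ingredient. The work is to check that, under Assumption~\ref{ass:main}, equality of the $p$-value functions $p_{\mathrm{UMPU}}=p_{\mathrm{LR}}$ for every $\theta_0$ is equivalent to their hypothesis, namely that the UMPU and GLR level-$\alpha$ tests coincide for every null parameter and every level. Both families are nested: the UMPU family by Proposition~\ref{prop:nesting}, and the GLR family because its rejection regions are lower level sets of the strictly concave $r_{\theta_0}$. Both inversions are exact and continuous by Lemma~\ref{lem:four-exact}.

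First, the forward translation. Suppose the two $p$-value functions agree. Then the sets $\{p\le\alpha\}$ agree for each $\alpha$. For UMPU this set is $(-\infty,c_1(\alpha)]\cup[c_2(\alpha),\infty)$, by exactness, strict nesting and continuity of $c_i(\alpha)$. For GLR it is the lower level set $\{r_{\theta_0}\le s_\alpha\}$ with $H_{\theta_0}(s_\alpha)=\alpha$. The tests therefore coincide, up to Lebesgue-null boundaries, at every level. Conversely, if the rejection regions agree for all $\alpha$, the inversion formula \eqref{eq:p-inversion} gives equal $p$-values pointwise. The sets in question are two-tailed intervals with endpoints continuous in $\alpha$, so null-set discrepancies cannot change the infimum. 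Only the boundary case needs care: I would use strict monotonicity of $H_{\theta_0}$ and of $\alpha\mapsto(c_1,c_2)$ to show that null boundary points do not affect the infimum.

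Second, I would check that Assumption~\ref{ass:main} supplies the regularity used in \cite{BarLevBshoutyLetac2002}. We have steepness, so the MLE is $\psi(x)$ on $S=M$. The density is positive and smooth on an open interval, and $V$ is real analytic on $M$. Their equivalence then yields the variance-function condition $V'(3VV''-2V'^2)(2VV''-V'^2)=0$. By analyticity, one factor vanishes identically on $M$, which gives $V$ constant, $V\propto m^3$, or $V\propto m^2$ after affine change of $m$. Via Remark~\ref{rem:affine} and Morris/Letac--Mora uniqueness of an NEF from $(V,M)$, these are the normal, inverse-Gaussian and gamma families.

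The converse direction also reduces to their theorem, applied to the three families. I would add a sanity check: for normal, GLR is $(x-m_0)^2$ and is symmetric. For gamma and inverse Gaussian, one can verify directly that the GLR level points $c_1<m_0<c_2$ satisfy \eqref{eq:unb}. Differentiating the level condition $r(c_1)=r(c_2)$ and the first-moment condition reduces this to an identity in $c_1,c_2$ that holds for those $h$.

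The main obstacle is the regularity matching in the second step. The paper's Assumption~\ref{ass:main} only gives $h\in C^2$, and one must make sure this matches the hypotheses of \cite{BarLevBshoutyLetac2002}. Where it falls short, I would argue that $V$ is analytic regardless, since the derivation there differentiates the coincidence relation in $\theta_0$ and $\alpha$ and only needs smoothness of $k$ and of the branch points.
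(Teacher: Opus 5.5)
Your proposal is correct and follows the paper's route. The paper's proof checks that Assumption~\ref{ass:main} (steepness, and a strictly positive $C^2$ Lebesgue density on $S=M$) supplies exactly the Bar-Lev--Bshouty--Letac hypotheses, and then quotes their theorem. Your explicit translation between equality of the $p$-value functions and equality of the nested UMPU/GLR test families fills in a step the paper leaves implicit; your re-derivation of the triple from the factorized variance equation is correct but redundant. No fallback is needed in your last paragraph, since Assumption~\ref{ass:main} already matches their hypotheses. The fallback you sketch would also not address the real issue: $V$ is analytic in every NEF, whereas differentiating the unbiasedness relation uses regularity of $h$.
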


\begin{proof}
Bar-Lev, Bshouty and Letac \cite{BarLevBshoutyLetac2002} assume that the NEF
is steep, that the generating measure is absolutely continuous, 
\begin{equation*}
\mu(dx)=h(x)\,dx,
\end{equation*}
and that $h$ is twice continuously differentiable on the mean domain. These
hypotheses are implied by Assumption~\ref{ass:main}: steepness is imposed
explicitly, $M=S=\mathrm{int}\,C$, and $h$ is strictly positive and $C^2$ on 
$S=M$. Their theorem therefore applies directly and yields precisely the
stated normal--gamma--inverse-Gaussian characterization.
\end{proof}

\begin{corollary}[Coincidence of all four constructions]
\label{cor:allfour} Under Assumptions~\ref{ass:main}--\ref{ass:unimodal}, 
\begin{equation*}
p_{\mathrm{ET}}=p_{\mathrm{dens}}=p_{\mathrm{UMPU}}=p_{\mathrm{LR}} \quad%
\hbox{throughout the family}
\end{equation*}
if and only if the NEF is Gaussian up to affine transformation.
\end{corollary}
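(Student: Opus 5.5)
The plan is to combine Corollary~\ref{cor:global} with Theorem~\ref{thm:BLB}; essentially no new analysis is needed.

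\emph{Necessity.} Assume the four $p$-value functions coincide for every $\theta_0\in\Theta$. Then in particular $p_{\mathrm{UMPU}}=p_{\mathrm{ET}}$ throughout the NEF. Part~(i) of Corollary~\ref{cor:global} needs only Assumption~\ref{ass:main}, and it immediately gives that the NEF is Gaussian up to affine transformation. Two alternative routes lead to the same conclusion. One uses $p_{\mathrm{UMPU}}=p_{\mathrm{dens}}$ or $p_{\mathrm{ET}}=p_{\mathrm{dens}}$ together with Assumption~\ref{ass:unimodal}, via parts~(ii)--(iii). The other intersects the normal--gamma--inverse-Gaussian list of Theorem~\ref{thm:BLB} with the Gaussian conclusion. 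A single route suffices, so I would cite part~(i).

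\emph{Sufficiency.} Let the NEF be Gaussian up to affine transformation. By Remark~\ref{rem:affine}, all four constructions are equivariant under affine maps, so it suffices to treat $\mu(dx)=e^{-x^2/2}dx$, i.e.\ $P_\theta=N(\theta,1)$. Corollary~\ref{cor:global} already gives
\begin{equation*}
p_{\mathrm{ET}}=p_{\mathrm{dens}}=p_{\mathrm{UMPU}}.
\end{equation*}
Theorem~\ref{thm:BLB} gives $p_{\mathrm{UMPU}}=p_{\mathrm{LR}}$, since the normal family is on its list. Alternatively, one can check this directly: $r_{\theta_0}(x)=-(x-\theta_0)^2/2$, so the LR lower level sets are the symmetric tails $\{|x-\theta_0|\ge c\}$. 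These are exactly the equal-tail regions, and the inversions agree.

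One should also confirm that Assumption~\ref{ass:unimodal} holds in the Gaussian case, so that $p_{\mathrm{dens}}$ is defined. This is clear: each tilted density is $N(\theta,1)$, which has the unique mode $\theta$, strict monotonicity on either side of it, and vanishes at $\pm\infty$.

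The only potential obstacle is bookkeeping. Corollary~\ref{cor:global} is used under Assumption~\ref{ass:unimodal} for every $\theta$, which is exactly the hypothesis of the present statement. "Up to affine transformation" should be read as in Remark~\ref{rem:affine}, so that the coincidence transfers from the standard normal NEF to any affine image.
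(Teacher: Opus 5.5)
Your proposal is correct and follows the paper's proof. Necessity comes from the UMPU--ET equality via part (i) of Corollary~\ref{cor:global}. Sufficiency comes from symmetry of the Gaussian tilts, which makes the ET, density and UMPU regions coincide once Assumption~\ref{ass:unimodal} is checked, with Theorem~\ref{thm:BLB} supplying the LR edge; your direct check that $r_{\theta_0}(x)=-(x-\theta_0)^2/2$ gives symmetric LR tails is a harmless self-contained substitute for that citation.
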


\begin{proof}
If all four coincide, then in particular $p_{\mathrm{UMPU}}=p_{\mathrm{ET}}$
throughout the family, so Corollary~\ref{cor:global} gives Gaussianity.
Conversely, the Gaussian family satisfies Assumption~\ref{ass:unimodal};
symmetry makes the equal-tail, density and UMPU regions identical, and
Theorem~\ref{thm:BLB} gives equality with LR.
\end{proof}

\subsection{A coincidence graph}

{\ Figure~\ref{fig:graph} summarizes the complete one-observation global
characterizations that require no additional power-variance or
sample-size-stability restriction. The two absent LR pairings are treated
separately below: ET--LR is solved within the power-variance class, and both
missing LR edges receive sample-size-stable statements. }

\begin{figure}[htbp]
\centering
\begin{tikzpicture}
\node[rounded corners,minimum width=2.5cm,minimum height=.85cm] (ET) at (0,3.2) {$p_{\rm ET}$};
\node[rounded corners,minimum width=2.5cm,minimum height=.85cm] (DEN) at (7.0,3.2) {$p_{\rm dens}$};
\node[rounded corners,minimum width=2.5cm,minimum height=.85cm] (U) at (0,0) {$p_{\rm UMPU}$};
\node[rounded corners,minimum width=2.5cm,minimum height=.85cm] (LR) at (7.0,0) {$p_{\rm LR}$};

\draw[thick] (ET)--node[left]{normal} (U);
\draw[thick] (ET)--node[above]{normal} (DEN);
\draw[thick] (DEN)--node[pos=.48,above,sloped]{normal (Assump.~\ref{ass:unimodal})} (U);
\draw[thick] (U)--(LR);
\node[font=\small] at (3.5,-0.62) {normal, gamma, inverse Gaussian \cite{BarLevBshoutyLetac2002}};
\end{tikzpicture}
{}
\caption{Global proved coincidence structure among the four exact two-sided $%
p$-value constructions. The three symmetry-based edges are obtained from
Theorem~\protect\ref{thm:local} and Corollary~\protect\ref{cor:global}; the
UMPU--LR edge is the theorem of Bar-Lev, Bshouty and Letac. The absent
ET--LR and density--LR edges are not asserted as unrestricted
one-observation characterizations.}
\label{fig:graph}
\end{figure}

\begin{remark}
\label{rem:LRedges} Theorem~\ref{thm:BLB} does not classify the ET--LR or
density--LR edges, because LR tests need not be unbiased. The next
subsection gives a finite-sample necessary differential equation for ET--LR
and a complete consequence within the power-variance class. A full
one-observation classification for an arbitrary NEF is deliberately not
claimed.
\end{remark}

\subsection{A finite-sample restriction for the equal-tail--LR edge}

{Throughout this subsection, primes on the variance function denote
derivatives with respect to the mean parameter $m$. No smoothness beyond
Assumption~\ref{ass:main} is imposed. The ET--LR hypothesis upgrades the
standing $C^2$ regularity of $h$ to real analyticity by itself. This
bootstrap is carried out in Stage~2 of the proof of Theorem~\ref%
{thm:ETLR-ODE} and recorded as Proposition~\ref{prop:h-from-V}.}

\begin{theorem}[A necessary quadratic-root variance-function restriction for
ET--LR coincidence]
\label{thm:ETLR-ODE} {Suppose Assumption~\ref{ass:main} holds}. If 
\begin{equation*}
p_{\mathrm{ET},\theta}=p_{\mathrm{LR},\theta} \qquad\hbox{for every }%
\theta\in\Theta
\end{equation*}
as functions of the canonical observation, then:

\begin{enumerate}[ label=(\roman*)]

\item the mean $m(\theta)$ is a median of $P_\theta$ for every $\theta$;

\item on the mean domain $M$, 
\begin{equation}  \label{eq:ETLR-ODE}
9V(m)^2V^{\prime \prime \prime }(m)-9V(m)V^{\prime }(m)V^{\prime \prime
}(m)+4\{V^{\prime }(m)\}^{3}=0;
\end{equation}

\item equivalently, 
\begin{equation}  \label{eq:V23quad}
\frac{d^3}{dm^3}V(m)^{2/3}=0,
\end{equation}
so there exist real constants $a,b,c$ such that 
\begin{equation}  \label{eq:Vquadraticroot}
V(m)=\{am^2+bm+c\}^{3/2},\qquad m\in M,
\end{equation}
with $am^2+bm+c>0$ on $M$.
\end{enumerate}
\end{theorem}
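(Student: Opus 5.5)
The plan is to turn ET--LR coincidence into a pointwise functional equation linking the null CDF to the likelihood-ratio pairing map. I would then expand that equation near the null mean, and finally use the fact that it holds for every $\theta_0$ to remove the dependence on $h$.

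\emph{Step 1 (median property and the pairing equation).} Fix $\theta_0$ and write $F=F_{\theta_0}$, $f=f_{\theta_0}$. By the proof of Lemma~\ref{lem:four-exact}, $r_{\theta_0}$ is strictly concave with its maximum at $m_0$. So each level below the maximum has exactly two preimages $x_1<m_0<x_2$, and $p_{\mathrm{LR}}(x_1)=p_{\mathrm{LR}}(x_2)=F(x_1)+1-F(x_2)$. Two consequences follow.
\begin{itemize}
\item[(a)] At $x=m_0$ we have $p_{\mathrm{LR}}=1$. Since $p_{\mathrm{ET}}=1$ only at a median, $m_0$ is a median, which gives (i). Because this holds for every $\theta_0$, $m(\theta)$ is a median of $P_\theta$ throughout the family.
\item[(b)] On each side of the median $p_{\mathrm{ET}}$ equals $2F$ or $2(1-F)$. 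Equality with $p_{\mathrm{LR}}$ at both $x_1$ and $x_2$ then forces $F(x_1)+F(x_2)=1$ for every paired couple, i.e.\ every couple with $x_2=\varphi(x_1)$ and $r_{\theta_0}(x_1)=r_{\theta_0}(x_2)$.
\end{itemize}
Differentiating the two relations $F(x_1)+F(x_2)=1$ and $r(x_1)=r(x_2)$ along the level curve eliminates $dx_1,dx_2$ and gives
\begin{equation*}
\frac{f(x_1)}{\theta_0-\psi(x_1)}=-\frac{f(x_2)}{\theta_0-\psi(x_2)}.
\end{equation*}
So the function $g_{\theta_0}(x)=f_{\theta_0}(x)/|\theta_0-\psi(x)|$ takes the same value at the two LR-paired points.

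\emph{Step 2 (regularity bootstrap).} Only $h\in C^2$ is assumed, so before expanding to third order I would upgrade regularity. The idea is to read the pairing identity together with $F(x_1)+F(x_2)=1$ as determining $f$ on one side of $m_0$ from its values on the other side, through the map $\varphi$. The map $\varphi$ is built from $\psi$, and $\psi$ is real analytic because $k$ is.

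I would then vary $\theta_0$. This writes $\log h(x)$, up to the analytic term $\theta_0x-k(\theta_0)$, as an expression in analytic functions of $(x,\theta_0)$. Choosing $\theta_0$ so that a given $x$ is the ``far'' point of a pair should transfer smoothness from $\psi$ to $h$ and give analyticity of $h$. I expect this to be the main obstacle. It needs a careful implicit-function argument showing that the pairing identity really propagates analyticity rather than merely preserving $C^2$. It also needs to produce, as a by-product, the density-at-the-mean identity of Proposition~\ref{prop:h-from-V}.

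\emph{Step 3 (local expansion at the mean).} With smoothness available, set $x=m_0+u$. Using $r'=\theta_0-\psi$ and $\psi'=1/V$, expand
\begin{equation*}
r_{\theta_0}(m_0+u)=-\frac{u^2}{2V}+\frac{V'}{6V^2}u^3+c_4u^4+\cdots,
\end{equation*}
with $c_4$ explicit in $V,V',V''$. From this, obtain the pairing map $\varphi(m_0-u)=m_0+u+\beta_2u^2+\beta_3u^3+O(u^4)$. Independently expand $f_{\theta_0}(m_0+u)$, whose logarithmic $x$-derivatives at $m_0$ are $\theta_0+(\log h)'(m_0)$, $(\log h)''(m_0)$, and so on.

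Inserting both expansions into $F(x_1)+F(x_2)=1$ (equivalently, into the $g$-identity) and matching powers of $u$ does the following. Low orders express $(\log h)'(m)$ and $(\log h)''(m)$ through $V,V'$ and $V,V',V''$ respectively. The next order gives a further relation involving $(\log h)'''(m)$.

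\emph{Step 4 (eliminating $h$).} Since everything holds for every $m\in M$, I would differentiate the lower-order relation (the expression of $(\log h)'$ or $(\log h)''$ in terms of $V$) with respect to $m$ and compare with the higher-order relation. This eliminates all derivatives of $h$ and leaves a single ODE in $V$, which should be \eqref{eq:ETLR-ODE}.

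Finally, a direct computation shows
\begin{equation*}
\frac{d^3}{dm^3}V^{2/3}=\tfrac{2}{27}V^{-7/3}\bigl(9V^2V'''-9VV'V''+4(V')^3\bigr).
\end{equation*}
Hence \eqref{eq:ETLR-ODE} is equivalent to \eqref{eq:V23quad}. Integrating three times on the interval $M$ gives $V^{2/3}=am^2+bm+c$, which is positive because $V>0$; this is \eqref{eq:Vquadraticroot}.
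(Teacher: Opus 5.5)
Your overall route is the paper's. Your pairing relation $F(x_1)+F(x_2)=1$ on LR-paired points is the paper's symmetry $G_m(u)=G_m(1-u)$ for $G_m=r_m\circ q$, read in the observation coordinate rather than the quantile coordinate. Your Steps~3--4 (expand at the mean, then eliminate $h$ by differentiating the lowest-order identity in $m$) are the paper's Stage~1/Stage~2 computation.

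\textbf{The gap is Step~2.} You leave the regularity upgrade open. Yet the second relation you need contains $(\log h)'''$, which need not exist under Assumption~\ref{ass:main}. The mechanism you sketch carries values of $h$ from one paired point to the other through analytic maps. At fixed $\theta_0$ that only transports whatever regularity $h$ already has. The $\theta_0$-variation that might do better is not carried out.

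The missing observation is that no separate propagation argument is needed. The lowest-order coefficient in your own Step~3 (order $u^2$, equivalently $G_m'''(1/2)=0$) uses nothing beyond the $C^2$ regularity of $h$. It gives $\ell_1(m)=-V'(m)/(3V(m))$ at \emph{every} $m\in M=S$. Since $\ell_1(m)=\psi(m)+(\log h)'(m)$, this reads $(\log h)'=-V'/(3V)-\psi$ on $S$, and the right side is real analytic. Hence $h$ is analytic, and integrating gives Proposition~\ref{prop:h-from-V} as the by-product you anticipated. So Step~2 belongs after the first order of Step~3, not before it; the paper orders its proof exactly this way.

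\textbf{A second correction concerns Step~3.} The $u^3$ coefficient of $F(x_1)+F(x_2)-1$ carries no information.
\begin{itemize}
\item The $\ell_2$ terms cancel, because they multiply an odd power.
\item What remains is $\beta_3+\ell_1\beta_2$. It vanishes automatically once $\ell_1=-\beta_2=-V'/(3V)$, because the LR pairing is an involution, which forces $\beta_3=\beta_2^2$.
\end{itemize}
So there is no fixed-$m$ relation expressing $(\log h)''$ through $V,V',V''$. The second independent relation sits at order $u^4$. You therefore need the pairing map through its $u^4$ term, which involves $r_m^{(5)}(m)$, and the relation involves $\ell_1,\ell_2,\ell_3$ jointly.

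In Step~4 you must obtain both $\ell_2=-1/V-V''/(3V)+(V')^2/(3V^2)$ and $\ell_3$ by differentiating the $\ell_1$ identity once and twice in $m$, using $\psi'=1/V$, and then substitute. This is exactly the paper's evaluation of $G_m^{(5)}(1/2)=0$, and it yields \eqref{eq:ETLR-ODE}. With these two repairs your argument goes through.
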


\begin{proof}
Fix $\theta$ and write $m=k^{\prime }(\theta)$, $F=F_\theta$, $f=f_\theta$,
and $q(u)=F^{-1}(u)$. Let 
\begin{equation*}
r_m(x) =(\theta-\psi(x))x-k(\theta)+k(\psi(x))
\end{equation*}
be the log generalized likelihood ratio in the canonical coordinate. It is
strictly concave and has its unique maximum at $x=m$. Consequently 
\begin{equation*}
p_{\mathrm{LR}}(m)=1.
\end{equation*}
If $p_{\mathrm{ET}}=p_{\mathrm{LR}}$, then $p_{\mathrm{ET}}(m)=1$, {so that $%
2\min\{F(m),1-F(m)\}=1$ and hence directly} 
\begin{equation*}
F(m)=\frac12.
\end{equation*}
This proves (i) and gives $q(1/2)=m$.

{The remainder of the proof proceeds in two stages. Stage~1 uses only
Assumption~\ref{ass:main} and produces the third-order identity \eqref{eq:l1}%
; Stage~2 uses that identity to upgrade the regularity of $h$, after which
the fifth-order computation is legitimate. In particular no smoothness
hypothesis beyond Assumption~\ref{ass:main} is required anywhere. }

\emph{Stage 1.} By Assumption~\ref{ass:main}, $f=e^{\theta x-k(\theta)}h(x)$
is $C^2$ and strictly positive on $S$, hence $F$ is $C^3$ with nonvanishing
derivative, and the inverse-function theorem gives $q\in C^3(0,1)$. The map $%
\psi=(k^{\prime })^{-1}$ is real analytic on $M$, because $k$ is real
analytic and strictly convex on $\Theta$, so $r_m$ is real analytic in its
observation coordinate. Thus 
\begin{equation*}
G_m(u)=r_m(q(u))
\end{equation*}
is $C^3$ near $u=1/2$, which is all that is needed to form $G_m^{\prime
\prime \prime }(1/2)$.

Define 
\begin{equation*}
H_m(s)=\mathbb{P}_\theta\{r_m(X)\leq s\}.
\end{equation*}
By the same two-branch argument as in Lemma~\ref{lem:four-exact}, $H_m$ is
continuous and strictly increasing on the interior of the range of $r_m$,
and 
\begin{equation*}
p_{\mathrm{LR}}(x)=H_m(r_m(x)).
\end{equation*}
Since 
\begin{equation*}
p_{\mathrm{ET}}(q(u)) =2\min\{u,1-u\} =p_{\mathrm{ET}}(q(1-u)),
\end{equation*}
ET--LR equality and strict monotonicity of $H_m$ imply 
\begin{equation}  \label{eq:G-sym}
G_m(u)=G_m(1-u),\qquad 0<u<1.
\end{equation}
In particular, every odd derivative of $G_m$ that exists at $u=1/2$
vanishes; in Stage~1 this is used for $G_m^{\prime\prime\prime}(1/2)=0$, and
in Stage~2 for $G_m^{(5)}(1/2)=0$.

For brevity write $f(m)=f_{\psi(m)}(m)$. Put 
\begin{equation*}
\ell_j(m)= \left. \frac{\partial^j}{\partial x^j}\log f_{\psi(m)}(x)
\right|_{x=m}.
\end{equation*}
Using 
\begin{equation*}
r_m^{\prime \prime }(m)=-\frac1{V(m)},\qquad r_m^{\prime \prime \prime }(m)=%
\frac{V^{\prime }(m)}{V(m)^2}, \qquad q^{\prime }(1/2)=\frac1{f(m)},
\end{equation*}
the identity $G_m^{\prime \prime \prime }(1/2)=0$ gives 
\begin{equation}  \label{eq:l1}
\ell_1(m)=-\frac{V^{\prime }(m)}{3V(m)}.
\end{equation}

{\emph{Stage 2.} Since $\theta\in\Theta$ was arbitrary and $%
m=k^{\prime}(\theta)$ maps $\Theta$ bijectively onto $M$, \eqref{eq:l1}
holds for every $m\in M$. Because $\ell_1(m)=\psi(m)+(\log h)^{\prime }(m)$,
it can be rewritten as 
\begin{equation}  \label{eq:logh-prime}
(\log h)^{\prime }(m)=-\frac{V^{\prime }(m)}{3V(m)}-\psi(m),\qquad m\in M=S,
\end{equation}
whose right-hand side is real analytic on $M$, since $V$ and $\psi$ are.
Hence $\log h$, and therefore $h$, is real analytic on $S$; consequently $f$%
, $F$ and $q$ are real analytic and $G_m\in C^\infty$ near $u=1/2$. The
fifth-order computation below is therefore legitimate under Assumption~\ref%
{ass:main} alone. The argument is sequential and not circular: \eqref{eq:l1}
was obtained in Stage~1 from $C^2$ regularity only. The conclusion %
\eqref{eq:logh-prime} is of independent interest and is recorded as
Proposition~\ref{prop:h-from-V}.}

Since 
\begin{equation*}
\ell_1(m)=\psi(m)+(\log h)^{\prime }(m), \qquad \psi^{\prime
}(m)=\frac1{V(m)},
\end{equation*}
differentiating \eqref{eq:l1} along the mean parameter expresses $\ell_2$
and $\ell_3$ through $V,V^{\prime },V^{\prime \prime },V^{\prime \prime
\prime }$. Substitution into the fifth-derivative identity displayed in
Appendix~\ref{app:ETLRcalc} yields 
\begin{equation}  \label{eq:G5factor}
G_m^{(5)}(1/2) = -\frac{2}{27 f(m)^5V(m)^4} \left[ 9V(m)^2V^{\prime \prime
\prime }(m)-9V(m)V^{\prime }(m)V^{\prime \prime }(m)+4\{V^{\prime }(m)\}^{3}%
\right].
\end{equation}
Equation \eqref{eq:G-sym} makes the left side zero, proving %
\eqref{eq:ETLR-ODE}. Finally, 
\begin{equation}  \label{eq:V23identity}
\frac{d^3}{dm^3}V^{2/3} = \frac{2}{27V^{7/3}} \left(9V^2V^{\prime \prime
\prime }-9VV^{\prime }V^{\prime \prime }+4(V^{\prime })^3\right),
\end{equation}
so \eqref{eq:ETLR-ODE} is equivalent to \eqref{eq:V23quad}, and $V^{2/3}$ is
a polynomial of degree at most two.
\end{proof}

\begin{proposition}[ET--LR coincidence determines the density at the mean]
\label{prop:h-from-V} Suppose Assumption~\ref{ass:main} holds and 
\begin{equation*}
p_{\mathrm{ET},\theta}=p_{\mathrm{LR},\theta} \qquad\hbox{for every }%
\theta\in\Theta.
\end{equation*}
Then there is a constant $C>0$ such that 
\begin{equation}  \label{eq:f-at-mean}
f_{\psi(m)}(m)=C\,V(m)^{-1/3},\qquad m\in M;
\end{equation}
that is, the null density evaluated at its own mean is proportional to $%
V^{-1/3}$. Equivalently, 
\begin{equation}  \label{eq:h-from-V}
h(x)=C\,V(x)^{-1/3}\exp\{k(\psi(x))-x\psi(x)\},\qquad x\in S.
\end{equation}
In particular $h$ is real analytic on $S$.
\end{proposition}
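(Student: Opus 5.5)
The plan is to integrate the Stage~1 identity \eqref{eq:l1} from the proof of Theorem~\ref{thm:ETLR-ODE} along the mean parameter. That identity was derived at an arbitrary $\theta\in\Theta$ and used only Assumption~\ref{ass:main} together with the ET--LR coincidence, so it holds for every $m\in M$. Set
\[
\phi(m)=\log f_{\psi(m)}(m)=\psi(m)m-k(\psi(m))+\log h(m).
\]
I will differentiate $\phi$ by the chain rule. This is legitimate because $h$ is $C^2$ on $S=M$ and $\psi$ is real analytic. The $\psi'$ terms give $\psi'(m)\{m-k'(\psi(m))\}=0$, since $k'(\psi(m))=m$. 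What remains is
\[
\phi'(m)=\psi(m)+(\log h)'(m)=\ell_1(m).
\]

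Combining this with \eqref{eq:l1} gives $\phi'(m)=-V'(m)/(3V(m))=\frac{d}{dm}\bigl(-\tfrac13\log V(m)\bigr)$ on $M$. Since $M=S$ is an open interval, hence connected, $\phi+\tfrac13\log V$ is constant on $M$. Call that constant $\log C$. This gives \eqref{eq:f-at-mean} with $C>0$.

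Next I unwind the definition of $\phi$ at $x=m$ to solve for $h$:
\[
\log h(x)=\log C-\tfrac13\log V(x)-x\psi(x)+k(\psi(x)),
\]
which is exactly \eqref{eq:h-from-V}. Real analyticity of $h$ then follows from this formula, because $V=k''\circ\psi$, $\psi$ and $k$ are real analytic and $V>0$. Equivalently, it follows from \eqref{eq:logh-prime} as already noted in Stage~2.

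The only delicate step is the one already handled in the theorem: justifying \eqref{eq:l1} from the symmetry $G_m(u)=G_m(1-u)$ at the third-derivative level with only $C^2$ regularity of $h$. Here I simply invoke it. One point to check is that the derivative of $\phi$ really collapses to $\ell_1$. In that computation the $x$-derivative of $\log f_{\psi(m)}(x)$ at $x=m$ is $\psi(m)+(\log h)'(m)$, while the extra dependence through $\psi(m)$ contributes $\psi'(m)(x-k'(\psi(m)))|_{x=m}=0$.
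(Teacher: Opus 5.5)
Your proposal is correct and is essentially the paper's own argument. The paper integrates $(\log h)'(m)=-V'(m)/(3V(m))-\psi(m)$ using $\frac{d}{dm}\{m\psi(m)-k(\psi(m))\}=\psi(m)$ to get \eqref{eq:h-from-V} first, and then substitutes to obtain \eqref{eq:f-at-mean}. You instead integrate $\frac{d}{dm}\log f_{\psi(m)}(m)=\ell_1(m)$ to get \eqref{eq:f-at-mean} first and then unwind; this is the same chain-rule cancellation used in the opposite order.
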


\begin{proof}
Identity \eqref{eq:logh-prime} holds on $M=S$. Since 
\begin{equation*}
\frac{d}{dm}\{m\psi(m)-k(\psi(m))\} =\psi(m)+m\psi^{\prime }(m)-k^{\prime
}(\psi(m))\psi^{\prime }(m) =\psi(m),
\end{equation*}
integration of \eqref{eq:logh-prime} gives \eqref{eq:h-from-V}. Substituting %
\eqref{eq:h-from-V} into $f_\theta(x)=e^{\theta x-k(\theta)}h(x)$ at $%
\theta=\psi(m)$ and $x=m$, the two exponential factors cancel and %
\eqref{eq:f-at-mean} follows. Real analyticity was established in Stage~2 of
the proof of Theorem~\ref{thm:ETLR-ODE}.
\end{proof}

\begin{remark}[Scope and strength of \eqref{eq:f-at-mean}]
\label{rem:h-from-V} Three comments.

(i) A generating measure is determined by the NEF it generates only up to $%
h\mapsto Ce^{ax}h$, so \eqref{eq:h-from-V} must be, and is, invariant under
that ambiguity: replacing $\mu$ by $e^{ax}\mu$ sends $\theta\mapsto\theta-a$%
, hence $\psi\mapsto\psi-a$, while $(\log h)^{\prime }\mapsto(\log
h)^{\prime }+a$, and the two changes cancel in \eqref{eq:logh-prime}. Form %
\eqref{eq:f-at-mean} makes the invariance manifest, since it refers only to
members of the family and to $V$.

(ii) Adding \eqref{eq:f-at-mean} to the differential equation %
\eqref{eq:ETLR-ODE} gives a strictly stronger condition than %
\eqref{eq:ETLR-ODE} alone. For the Gaussian family $V\equiv\sigma^2$ and $%
f_{\psi(m)}(m)=(2\pi\sigma^2)^{-1/2}$, both sides constant, as they must be.
By contrast, for the gamma NEF of shape $p>0$ one has $V(m)=m^2/p$ and $%
f_{\psi(m)}(m)=p^pe^{-p}\Gamma(p)^{-1}m^{-1}$, whereas \eqref{eq:f-at-mean}
would require $f_{\psi(m)}(m)\propto m^{-2/3}$; the mismatch is by the
factor $m^{-1/3}$ for every shape. For the inverse-Gaussian NEF generated by
the L\'evy density \eqref{eq:Levy} one has $V(m)=2m^3$ and $\psi(m)=-1/(4m^2)
$, so \eqref{eq:h-from-V} would force $h(x)\propto x^{-1}e^{-1/(4x)}$,
whereas the true generating density is $\propto x^{-3/2}e^{-1/(4x)}$: the
exponential factors agree and the powers do not.

(iii) Proposition~\ref{prop:h-from-V} therefore excludes the gamma and
inverse-Gaussian families from ET--LR coincidence analytically, without
recourse to the numerical comparison of Section~\ref{sec:examples}. It is
also the mechanism behind Remark~\ref{rem:quadratic-root-not-sufficient}
below, where it disposes of a variance function that satisfies %
\eqref{eq:ETLR-ODE} identically.
\end{remark}

\begin{corollary}[ET--LR coincidence in the power-variance class]
\label{cor:ETLR-PVF} Suppose Assumption~\ref{ass:main} holds, and suppose
that $\mathcal{F}(\mu)$ has a power variance function: after an affine
change of coordinate, exactly one
of the following holds for some $A>0$, 
\begin{equation*}
\hbox{(a)}\quad p=0,\qquad M=\mathbb{R},\qquad V(m)=A,
\end{equation*}
\begin{equation*}
\hbox{(b)}\quad p\neq 0,\qquad M=(b,\infty),\qquad V(m)=A(m-b)^p.
\end{equation*}
Then 
\begin{equation*}
p_{\mathrm{ET}}=p_{\mathrm{LR}}\quad\hbox{throughout the NEF}
\end{equation*}
if and only if case \textup{(a)} occurs, that is, if and only if the NEF is
Gaussian up to affine transformation.
\end{corollary}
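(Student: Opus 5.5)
The plan is to prove sufficiency directly and to obtain necessity by combining the two necessary conditions already established, Theorem~\ref{thm:ETLR-ODE} and Proposition~\ref{prop:h-from-V}, with a short case analysis on the exponent $p$. By Remark~\ref{rem:affine}, ET--LR coincidence is invariant under affine change of coordinate, so we may work directly with the normalized forms (a) and (b).

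For sufficiency, in case (a) the family is $N(m,A)$. Then $k(\theta)=A\theta^2/2$ and $\psi(x)=x/A$, so $r_{\theta_0}(x)=-(x-m_0)^2/(2A)$. Its lower level sets are therefore the tails $\{|x-m_0|\ge t\}$, which are symmetric about $m_0$. Since $P_{\theta_0}$ is symmetric about $m_0$, each such tail pair has equal null mass on both sides. It therefore coincides with the equal-tail region of the same size, and the two inversions agree for every $\theta_0$.

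For necessity, suppose ET--LR coincidence holds and we are in case (b). Theorem~\ref{thm:ETLR-ODE}(iii) says that $V^{2/3}=A^{2/3}(m-b)^{2p/3}$ agrees on $M=(b,\infty)$ with a polynomial of degree at most two. A power $(m-b)^{s}$ on a half-line is such a polynomial only when $s\in\{0,1,2\}$; this can be seen by checking that $\frac{d^3}{dm^3}(m-b)^s=s(s-1)(s-2)(m-b)^{s-3}$ vanishes identically. With $s=2p/3$ and $p\neq0$, this leaves only $p=3/2$ and $p=3$.

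It remains to exclude these two exponents; I expect this step to be the main obstacle.
\begin{itemize}
\item For $p=3$, the family is inverse Gaussian after an affine change. Proposition~\ref{prop:h-from-V} requires $f_{\psi(m)}(m)\propto V(m)^{-1/3}\propto (m-b)^{-1}$. However, the inverse-Gaussian density at its mean with fixed shape $\lambda$ is $\sqrt{\lambda/(2\pi)}\,(m-b)^{-3/2}$, so the two conditions contradict each other. Equivalently, I would quote Remark~\ref{rem:h-from-V}(ii), which already records this mismatch of powers in $h$.
\item For $p=3/2$, I would invoke the standard classification of power variance functions (Jørgensen; Bar-Lev--Enis). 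For $1<p<2$, the NEF is the Tweedie compound Poisson--gamma family. Its generating measure has an atom at the left endpoint of the support, so it is not Lebesgue dominated and Assumption~\ref{ass:main} fails.
\end{itemize}
As a fallback that avoids the classification theorem, one could feed $V=A(m-b)^{3/2}$ into \eqref{eq:h-from-V} and check that the resulting $h$ does not generate a family with this variance function. That route requires identifying the cumulant function explicitly, so I would prefer to cite the classification.

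With both exponents excluded, case (b) is impossible, and only case (a), the Gaussian family up to affine transformation, remains.
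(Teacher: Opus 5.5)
Your proposal is correct and follows essentially the same route as the paper. The steps match: affine reduction via Remark~\ref{rem:affine}, direct verification of coincidence in the Gaussian case, and the third-derivative factorization of $V^{2/3}$, which leaves $p\in\{3/2,3\}$. Then $p=3/2$ is excluded by the atom of the compound Poisson--gamma law, and $p=3$ by Proposition~\ref{prop:h-from-V}. Your comparison of $f_{\psi(m)}(m)\propto (m-b)^{-3/2}$ with the required $(m-b)^{-1}$ is the invariant form of the paper's generating-density comparison in Remark~\ref{rem:h-from-V}(ii).
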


\begin{remark}
\label{rem:pvf_cases} The two cases in Corollary~\ref{cor:ETLR-PVF} must be
listed separately. In case~(a), the variance function is constant and its
maximal mean domain is $\mathbb{R}$. In case~(b), the maximal mean domain is
a half-line $(b,\infty)$. Since no affine transformation maps $\mathbb{R}$
onto a half-line, the Gaussian case cannot be included in the class
described by case~(b).
\end{remark}

\begin{proof}
By Remark~\ref{rem:affine} the hypothesis is unaffected by the affine change
of coordinate in the statement, so we may assume $V$ is already in the form
\textup{(a)} or \textup{(b)}.
In case \textup{(a)}, $k^{\prime \prime }\equiv A$, so $k$ is quadratic and
the NEF is Gaussian up to affine transformation; each $P_\theta $ is then
symmetric about its mean, the LR level sets are the equal-tail regions, and $%
p_{\mathrm{ET}}=p_{\mathrm{LR}}$.

In case \textup{(b)}, suppose $p_{\mathrm{ET}}=p_{\mathrm{LR}}$ throughout
the family. By \eqref{eq:V23quad}, $V^{2/3}=A^{2/3}(m-b)^{2p/3}$ has
vanishing third derivative on $(b,\infty)$, that is 
\begin{equation}  \label{eq:PVFfactor}
\frac{4}{27}\,A^{2/3}\,p\,(2p-3)(p-3)\,(m-b)^{(2p-9)/3}=0,\qquad m>b.
\end{equation}
The factor $(m-b)^{(2p-9)/3}$ is strictly positive there, so the admissible
exponents are $p=0$, $p=3/2$ and $p=3$. The first is excluded by the
hypothesis of case \textup{(b)}. For the remaining two we invoke the
standard power-variance classification of Bar-Lev and Enis \cite%
{BarLevEnis1986}, in the exponential-dispersion formulation of J\o rgensen 
\cite{Jorgensen1987}. For a maximal half-line mean domain, the exponent $%
p=3/2\in(1,2)$ belongs to the compound-Poisson--gamma range and therefore
has an atom at the boundary $b$; it is excluded by the purely
Lebesgue-dominated Assumption~\ref{ass:main}. The exponent $p=3$ is the
inverse-Gaussian case. It satisfies the differential restriction %
\eqref{eq:ETLR-ODE}, so the differential equation alone does not exclude it.
The additional necessary identity of Proposition~\ref{prop:h-from-V} does:
Remark~\ref{rem:h-from-V}(ii) shows that ET--LR coincidence would force a
generating density proportional to $x^{-1}e^{-1/(4x)}$ rather than the
correct $x^{-3/2}e^{-1/(4x)}$. Section~\ref{sec:examples} confirms this
numerically with a null member for which $p_{\mathrm{LR}}(m)=1$ but $p_{%
\mathrm{ET}}(m)<1$. Hence case \textup{(b)} admits no NEF with ET--LR
coincidence, and the equivalence follows.
\end{proof}

\begin{remark}[The quadratic-root restriction is necessary but not
sufficient: the normal inverse-Gaussian NEF]
\label{rem:quadratic-root-not-sufficient} Equation~\eqref{eq:Vquadraticroot}
is necessary but is not a one-observation classification, as the following
counterexample shows.

Let $k(\theta)=1-\sqrt{1-\theta^{2}}$ on $\Theta=(-1,1)$. Then $m=k^{\prime
}(\theta)=\theta(1-\theta^{2})^{-1/2}$, so $M=\mathbb{R}$, $%
1+m^{2}=(1-\theta^{2})^{-1}$ and 
\begin{equation}  \label{eq:NIGvf}
V(m)=k^{\prime \prime }(\psi(m))=(1+m^{2})^{3/2},\qquad
\psi(m)=m(1+m^{2})^{-1/2}.
\end{equation}
This is the symmetric normal inverse-Gaussian NEF of Barndorff-Nielsen \cite%
{BarndorffNielsen1997} with $\alpha=\delta=1$, $\beta=\mu=0$, generated by
the strictly positive real analytic Lebesgue density 
\begin{equation}  \label{eq:NIGdensity}
h(x)=\frac{e}{\pi}\,\frac{K_{1}\big(\sqrt{1+x^{2}}\big)}{\sqrt{1+x^{2}}}%
,\qquad x\in\mathbb{R},
\end{equation}
$K_{1}$ being the modified Bessel function of the second kind. The family is
steep with $M=S=\mathbb{R}$, so Assumption~\ref{ass:main} holds, and $%
V^{2/3}=1+m^{2}$ is quadratic, so \eqref{eq:ETLR-ODE} and %
\eqref{eq:Vquadraticroot} hold identically. Thus $(1+m^{2})^{3/2}$ is not
merely an algebraically admissible solution but a genuine variance function
of a continuous steep NEF.

It does not, however, have ET--LR coincidence. We test the invariant form %
\eqref{eq:f-at-mean}, so that the ambiguity of Remark~\ref{rem:h-from-V}(i)
plays no role. Since $k(\psi(m))-m\psi(m)=1-\sqrt{1+m^{2}}$, the exponential
factors in $f_{\psi(m)}(m)=e^{m\psi(m)-k(\psi(m))}h(m)$ combine with %
\eqref{eq:NIGdensity} to give, with $z=\sqrt{1+m^{2}}$, 
\begin{equation}  \label{eq:NIGfatmean}
f_{\psi(m)}(m)=\frac{1}{\pi}\,\frac{e^{z}K_{1}(z)}{z},\qquad V(m)^{-1/3}=%
\frac{1}{z}.
\end{equation}
So \eqref{eq:f-at-mean} would require $e^{z}K_{1}(z)$ to be constant on $%
[1,\infty)$, whereas $e\,K_{1}(1)>1$ and $e^{z}K_{1}(z)=\sqrt{\pi/(2z)}%
\{1+O(z^{-1})\}\to0$. Hence $p_{\mathrm{ET},\theta}\neq p_{\mathrm{LR}%
,\theta}$ for some $\theta$.

Consequently \eqref{eq:Vquadraticroot} is strictly weaker than ET--LR
coincidence even within the positive analytic solutions that are genuine
variance functions of continuous steep NEFs, and Proposition~\ref%
{prop:h-from-V} supplies what is needed to exclude this family. Theorem~\ref%
{thm:ETLR-ODE} identifies a narrow candidate class but asserts no
sufficiency outside the power-variance subclass.
\end{remark}

\begin{remark}[Why the unrestricted one-observation ET--LR edge is delicate]

\label{rem:meanmedian} The first conclusion of Theorem~\ref{thm:ETLR-ODE}
already shows that a general ET--LR characterization contains a mean--median
characterization problem. Letac, Mattner and Piccioni \cite%
{LetacMattnerPiccioni2018} proved a related Gaussian characterization and
explicitly identified the mean--median problem as delicate. Our ET--LR
hypothesis is stronger than mean--median coincidence, as witnessed by the
additional constraints \eqref{eq:ETLR-ODE} and \eqref{eq:f-at-mean}, but we
do not claim that these conditions alone solve the unrestricted NEF problem. 
{Remark~\ref{rem:quadratic-root-not-sufficient} shows that %
\eqref{eq:ETLR-ODE} by itself certainly does not, and identifies %
\eqref{eq:f-at-mean} as the additional ingredient that disposes of the
normal inverse-Gaussian family.}
\end{remark}

\subsection{Exactness does not imply closeness}

\begin{proposition}
\label{prop:distance} Let $p_1(X)$ and $p_2(X)$ be two continuous exact $p$%
-values under the same simple null. Then 
\begin{equation}  \label{eq:L1bound}
\mathbb{E}_0|p_1(X)-p_2(X)|\leq\frac12,
\end{equation}
and 
\begin{equation}  \label{eq:L2bound}
\mathbb{E}_0\{p_1(X)-p_2(X)\}^2\leq\frac13.
\end{equation}
Both bounds are sharp. On the other hand, exactness alone gives no
nontrivial universal pointwise bound: the supremum of $|p_1-p_2|$ can equal $%
1$.
\end{proposition}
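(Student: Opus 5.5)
Under the simple null, write $U=p_1(X)$ and $W=p_2(X)$. Exactness and continuity make both uniform on $(0,1)$, so the joint law of $(U,W)$ is a coupling of two uniforms. Each claim in Proposition~\ref{prop:distance} is then a statement about such couplings, and the NEF structure plays no role. This fits the paper's description of the bound as a calibration benchmark.

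For \eqref{eq:L2bound} I would expand
\[
\mathbb{E}(U-W)^2=\mathbb{E}U^2+\mathbb{E}W^2-2\mathbb{E}UW=\tfrac23-2\mathbb{E}UW .
\]
The task is then to minimize $\mathbb{E}UW$ over couplings. By the rearrangement (Hoeffding--Fr\'echet) inequality, $\mathbb{E}UW\geq\int_0^1 u(1-u)\,du=\tfrac16$, with equality for the antitone coupling $W=1-U$. This gives the bound $\tfrac23-\tfrac13=\tfrac13$.

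For \eqref{eq:L1bound} I would bound the integrand pointwise. For $u,w\in[0,1]$,
\[
|u-w|\le \max(u,w)-\min(u,w)\le \max(u,w) +\max(1-u,1-w)-1.
\]
This route is awkward, so instead I would use the elementary inequality $|u-w|\le u+w-2uw$. It holds because $u+w-2uw-|u-w|=2\min(u,w)\bigl(1-\max(u,w)\bigr)\ge0$. Taking expectations gives
\[
\mathbb{E}|U-W|\le 1-2\mathbb{E}UW\le 1-\tfrac13=\tfrac23 .
\]
That falls short of $\tfrac12$, so a sharper argument is needed, and this is the main obstacle.

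My first attempt would be the antitone coupling $W=1-U$, which gives $\mathbb{E}|2U-1|=\tfrac12$. The work is to show that nothing does better. I would use the layer-cake identity
\[
|U-W|=\int_0^1 \bigl|\mathbf 1\{U>t\}-\mathbf 1\{W>t\}\bigr|\,dt .
\]
Taking expectations,
\[
\mathbb{E}|U-W|=\int_0^1 \mathbb{P}\bigl(\mathbf 1\{U>t\}\ne \mathbf 1\{W>t\}\bigr)\,dt .
\]
Both events $\{U>t\}$ and $\{W>t\}$ have probability $1-t$, so the probability that they disagree is at most $2\min(t,1-t)$. Integrating this bound over $t$ gives $\tfrac12$. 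The antitone coupling attains the bound at every $t$, so the inequality is sharp. Sharpness of the $L^2$ bound follows from the same coupling.

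Each sharp coupling must also be realized by actual $p$-values. For $p_1=F_0(T)$ and $p_2=1-F_0(T)$, the two directed tail $p$-values are both exact.

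For the pointwise claim I would use the same pair. At an observation with $F_0(T(x))$ near $0$, the two $p$-values are near $0$ and near $1$, so $\sup|p_1-p_2|=1$, with the supremum approached or attained at the support endpoints. If two-sided constructions are preferred, I would instead take $p_1=p_{\mathrm{ET}}$ and let $p_2$ be the inversion of the central regions $[q_{1/2-\alpha/2},q_{1/2+\alpha/2}]$, read in reverse order. Near the median, $p_1\to1$ while $p_2\to0$, so the supremum is again $1$.
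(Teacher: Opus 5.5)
Your proof is correct, and for the $L^1$ bound it is the paper's argument repackaged. Since $\mathbb{P}(\mathbf 1\{U>t\}\neq\mathbf 1\{W>t\})=2(1-t)-2\,\mathbb{P}(U>t,W>t)$, integrating in $t$ reproduces the paper's identity $\mathbb{E}|U-W|=1-2\,\mathbb{E}\min(U,W)$. Your bound $2\min(t,1-t)$ on the disagreement probability is exactly the Fr\'echet bound $\mathbb{P}(U>t,W>t)\geq\max(1-2t,0)$ that the paper integrates.

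The genuine difference is in the $L^2$ bound. You invoke the Hoeffding--Fr\'echet rearrangement inequality to get $\mathbb{E}UW\geq\tfrac16$. The paper obtains the same inequality from the one-line observation $0\leq\mathbb{E}(U+W-1)^2=2\,\mathbb{E}UW-\tfrac13$. Your route rests on a general coupling theorem that works for arbitrary marginals. The paper's is self-contained, and rewritten as $\mathbb{E}(U-W)^2=\tfrac13-\mathbb{E}\{U-(1-W)\}^2$ it shows at once that $W=1-U$ is the only extremal coupling.

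In turn, you do something the paper leaves implicit. You check that the extremal coupling is realized by genuine exact $p$-values: $F_0(T)$ and $1-F_0(T)$, or $p_{\mathrm{ET}}$ and the central-region inversion, which is just $1-p_{\mathrm{ET}}$. This makes sharpness a statement about $p$-values rather than merely about couplings of uniforms. The two abandoned attempts at the $L^1$ bound can simply be deleted.
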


\begin{proof}
Put $U=p_1(X)$ and $W=p_2(X)$. Exactness gives uniform marginals. Since 
\begin{equation*}
\mathbb{E}|U-W|=\mathbb{E}(U+W)-2\mathbb{E}\min(U,W)=1-2\mathbb{E}\min(U,W),
\end{equation*}
and 
\begin{equation*}
\mathbb{P}\{\min(U,W)>t\} =\mathbb{P}(U>t,W>t) \geq\max(1-2t,0),
\end{equation*}
we have 
\begin{equation*}
\mathbb{E}\min(U,W) \geq\int_0^{1/2}(1-2t)\,dt =\frac14,
\end{equation*}
which proves \eqref{eq:L1bound}. Further, 
\begin{equation*}
0\leq\mathbb{E}(U+W-1)^2 =2\mathbb{E}(U^2)+2\mathbb{E}(UW)-2 =-\frac13+2%
\mathbb{E}(UW),
\end{equation*}
so $\mathbb{E}(UW)\geq1/6$. Therefore 
\begin{equation*}
\mathbb{E}(U-W)^2 =\frac23-2\mathbb{E}(UW) \leq\frac13.
\end{equation*}
For $W=1-U$, equality holds in both bounds, and 
\begin{equation*}
\sup_{0<u<1}|u-(1-u)|=1.
\end{equation*}
\end{proof}

\begin{remark}
\label{rem:distance-secondary} Proposition~\ref{prop:distance} is
distribution-free and uses only the fact that the two null marginals are
uniform. It is included as a calibration benchmark, not as an NEF
characterization. For the examples below we therefore also report
model-specific discrepancies and decision disagreement probabilities.
\end{remark}

\section{Arbitrary sample size and the canonical sufficient statistic}

\label{sec:samplesize}

{\ Let $X_1,\ldots,X_n$ be i.i.d.\ from the NEF \eqref{eq:NEF}, and put 
\begin{equation*}
T_n=\sum_{i=1}^nX_i,\qquad \overline X_n=T_n/n.
\end{equation*}
The statistic $T_n$ is canonical and sufficient for $\theta$. Its
distribution is the NEF generated by the convolution measure $\mu^{*n}$: 
\begin{equation}  \label{eq:TnNEF}
P_{\theta}^{(n)}(dt) = \exp\{\theta t-nk(\theta)\}\,\mu^{*n}(dt).
\end{equation}
Thus 
\begin{equation*}
k_n(\theta)=nk(\theta),\qquad \mathbb{E}_\theta T_n=nm(\theta),\qquad 
\mathrm{Var}_\theta(T_n)=nV(m(\theta)).
\end{equation*}
The mean domain is $nM$, and, if $t$ denotes the mean coordinate for $T_n$,
its variance function is 
\begin{equation}  \label{eq:Vn}
V_n(t)=nV(t/n).
\end{equation}
}

\begin{remark}[Relation with the reproducibility results of Bar-Lev and Enis]

\label{rem:reproducibility} Equation~\eqref{eq:Vn} is a universal
convolution identity for NEFs and does not, by itself, assert that the
distribution of a scaled sum belongs to the same one-parameter family as the
summands. This distinction is closely related to the reproducibility problem
studied by Bar-Lev and Enis \cite{BarLevEnis1986}. In their terminology,
reproducibility asks for constants $\alpha_n$ and a reparametrization $g_n$
such that 
\begin{equation*}
\mathcal{L}\!\left(\alpha_n\sum_{i=1}^n X_i\right)=P_{g_n(\theta)} \quad%
\hbox{whenever }X_i\overset{\mathrm{i.i.d.}}{\sim}P_\theta,
\end{equation*}
so that, after deterministic rescaling, the convolution returns to the
original NEF. Bar-Lev and Enis demonstrated the intimate connection between
this stronger property and NEFs with power variance functions, and treated
convolution as one of the central structural features of that class.

This is directly relevant to the perspective of Proposition~\ref%
{prop:finite-n}: the normal, gamma, and inverse-Gaussian NEFs correspond to
the power-variance exponents $0$, $2$, and $3$, respectively, and their
scaled-convolution behavior is part of the reproducibility structure.
Nevertheless, Proposition~\ref{prop:finite-n} does not require the
reproducibility theorem. For a fixed $n$ it uses only the universal
identities $k_n=nk$, $M_n=nM$, and $V_n(t)=nV(t/n)$, together with the
normal--gamma--inverse-Gaussian characterization. Thus the connection with 
\cite{BarLevEnis1986} is conceptual and structural, while the proof below is
strictly more elementary and does not invoke reproducibility as an
additional assumption.
\end{remark}

\begin{remark}[Which density is being ordered?]
\label{rem:Tn-density} For a sample of size $n$, $p_{\mathrm{dens},n}$ in
this paper means density ordering of the one-dimensional sufficient
statistic $T_n$ under its marginal null density, equivalently of $\overline
X_n$ because the two coordinates differ affinely. It does not mean ordering
the full sample vector by its $n$-dimensional joint density. Indeed, 
\begin{equation*}
f_\theta^{\otimes n}(x_1,\ldots,x_n) = \exp\{\theta T_n-nk(\theta)\}
\prod_{i=1}^n h(x_i),
\end{equation*}
and the factor $\prod_i h(x_i)$ is generally not determined by $T_n$.
\end{remark}

\begin{proposition}[Finite-$n$ transfer of the coincidence results]
\label{prop:finite-n} Fix $n\geq1$ and suppose the marginal law of $T_n$
satisfies the regularity assumptions required for the corresponding
one-dimensional results.

\begin{enumerate}[ label=(\roman*)]

\item 
\begin{equation*}
p_{\mathrm{UMPU},n}=p_{\mathrm{ET},n} \quad\hbox{throughout }\theta
\end{equation*}
if and only if the original NEF is Gaussian up to affine transformation.

\item If the marginal density of $T_n$ satisfies the two-branch geometry of
Assumption~\ref{ass:unimodal}, then the same Gaussian characterization holds
for 
\begin{equation*}
p_{\mathrm{UMPU},n}=p_{\mathrm{dens},n} \quad\hbox{and}\quad p_{\mathrm{ET}%
,n}=p_{\mathrm{dens},n}.
\end{equation*}

\item 
\begin{equation*}
p_{\mathrm{UMPU},n}=p_{\mathrm{LR},n} \quad\hbox{throughout }\theta
\end{equation*}
if and only if the original NEF is, up to affine transformation, normal,
gamma, or inverse Gaussian.
\end{enumerate}
\end{proposition}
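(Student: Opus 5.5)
The plan is to treat the law of $T_n$ as a one-observation NEF in its own right and apply the one-observation results to it. By \eqref{eq:TnNEF}, $\{P^{(n)}_\theta\}$ is the NEF generated by $\mu^{*n}$, with cumulant function $k_n=nk$ on the same parameter set $\Theta$, mean domain $nM$ and variance function $V_n(t)=nV(t/n)$ as in \eqref{eq:Vn}. The hypothesis that $T_n$ satisfies the required regularity means Assumption~\ref{ass:main} holds for $F(\mu^{*n})$; in (ii) it also means Assumption~\ref{ass:unimodal} holds for every tilt. The $p$-values $p_{\bullet,n}$ are, by definition (Remark~\ref{rem:Tn-density}), the four constructions applied to this NEF in its canonical coordinate $t$. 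Each hypothesis ``throughout $\theta$'' is therefore exactly the corresponding global hypothesis for $F(\mu^{*n})$.

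For (i) and (ii), apply Corollary~\ref{cor:global} to $F(\mu^{*n})$. It gives that $F(\mu^{*n})$ is Gaussian up to affine transformation, i.e.\ $k_n$ is quadratic on $\Theta$. Since $k=k_n/n$, $k$ is quadratic, so $V$ is constant and the original NEF is Gaussian. For the converse, if $k$ is quadratic then so is $k_n$, and $F(\mu^{*n})$ is Gaussian; the converse halves of Corollary~\ref{cor:global} then give the equalities.

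For (iii), apply Theorem~\ref{thm:BLB} to $F(\mu^{*n})$. This shows that $V_n$ is, up to affine transformation, the variance function of a normal, gamma or inverse-Gaussian NEF. After an affine change of coordinates, that means $V_n$ is constant, a positive multiple of $(t-b)^2$, or a positive multiple of $(t-b)^3$. We transfer this back to $V$ through $V(m)=n^{-1}V_n(nm)$.
\begin{itemize}
\item If $V_n\equiv A$, then $V\equiv A/n$.
\item If $V_n(t)=A(t-b)^p$, then $V(m)=An^{p-1}(m-b/n)^p$, a power variance function with the same exponent $p$ on the half-line $M=(b/n,\infty)$.
\end{itemize}
Because a variance function together with its mean domain determines the NEF (Letac--Mora), $F(\mu)$ is then normal, gamma or inverse Gaussian up to affine transformation. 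The affine maps must be tracked carefully: an affine map $y=at+c$ on the $T_n$ scale corresponds to $y'=am+c/n$ on the scale of $\overline X_n=T_n/n$, and Remark~\ref{rem:affine} shows that the $p$-value coincidences are invariant under it. For the converse, each of the three families is closed under convolution with the same exponent (gamma shape $np$, inverse Gaussian with scaled parameters, normal). So $F(\mu^{*n})$ is again of the same type, and Theorem~\ref{thm:BLB} yields $p_{\mathrm{UMPU},n}=p_{\mathrm{LR},n}$.

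I expect the main obstacle to be bookkeeping rather than analysis. One must check that the affine ambiguity allowed in the conclusions of Theorem~\ref{thm:BLB} and Corollary~\ref{cor:global} pulls back through $t=nm$ to an affine ambiguity for the original family. One must also confirm that Assumption~\ref{ass:main} for $\mu^{*n}$ is what is being assumed, not derived: steepness passes to convolutions automatically since $nM=\mathrm{int}(nC)$, but positivity and $C^2$ smoothness of the density of $\mu^{*n}$ are part of the stated hypothesis.
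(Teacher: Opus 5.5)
Your proposal is correct and follows essentially the paper's own proof. You apply Corollary~\ref{cor:global} and Theorem~\ref{thm:BLB} to the NEF generated by $\mu^{*n}$, pass quadraticity of $k_n=nk$ back to $k$ for (i)--(ii), and for (iii) carry the constant, shifted-square and shifted-cube variance forms back through $V(m)=n^{-1}V_n(nm)$ with $M=n^{-1}M_n$, using that $(V,M)$ determines the NEF. Your converse via convolution closure is the same observation the paper makes by substituting those forms into \eqref{eq:Vn}, and your tracking of affine maps between the $T_n$ and $\overline X_n$ scales correctly makes explicit what the paper leaves implicit.
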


\begin{proof}
Apply Corollary~\ref{cor:global} and Theorem~\ref{thm:BLB} to the NEF
generated by $\mu^{*n}$. If $nk(\theta)$ is quadratic, then $k(\theta)$ is
quadratic, proving (i), and the density pairings follow identically under
the stated marginal geometry.

For (iii), one must preserve the specific affine-power variance forms, not
merely polynomial degree. If the $T_n$-NEF is Gaussian, then $%
V_n(t)\equiv\sigma_n^2$, so from \eqref{eq:Vn}, 
\begin{equation*}
V(m)=\frac{\sigma_n^2}{n},
\end{equation*}
which is constant. If the $T_n$-NEF is an affine gamma family, then for
constants $A_n>0$ and $b_n$, 
\begin{equation*}
V_n(t)=A_n(t-b_n)^2.
\end{equation*}
Putting $t=nm$ in \eqref{eq:Vn} gives 
\begin{equation*}
V(m)=nA_n\left(m-\frac{b_n}{n}\right)^2,
\end{equation*}
the affine gamma form. Likewise, if 
\begin{equation*}
V_n(t)=B_n(t-b_n)^3
\end{equation*}
is the affine inverse-Gaussian form, then 
\begin{equation*}
V(m)=n^2B_n\left(m-\frac{b_n}{n}\right)^3.
\end{equation*}
Conversely, substituting a constant, shifted-square, or shifted-cube form
for $V$ into \eqref{eq:Vn} produces the same corresponding form for $V_n$.
Moreover, the mean domain of the $T_n$-NEF is exactly $nM$, so the full
variance-function pair $(V,M)$, not merely the polynomial degree, is carried
back and forth by this scaling. Since an NEF is determined by its variance
function together with its maximal mean domain, the normal--gamma--inverse-
Gaussian triple is preserved in both directions.
\end{proof}

\begin{theorem}[Sample-size-stable mean--median characterization]
\label{thm:meanmedian-samplesize} Suppose Assumption~\ref{ass:main} holds,
and let $\mathcal{N}\subset\mathbb{N}$ be unbounded. If 
\begin{equation}  \label{eq:meanmedian-asymp}
\mathbb{P}_\theta(T_n\leq nm(\theta))-\frac12 =o(n^{-1/2}) \qquad \hbox{as }%
n\to\infty\hbox{ through }\mathcal{N},
\end{equation}
for every $\theta\in\Theta$, then the NEF is Gaussian up to affine
transformation. In particular the conclusion holds if $nm(\theta)$ is a
median of $T_n$ for every $\theta\in\Theta$ and every $n\in\mathcal{N}$.
Conversely, in a Gaussian NEF $nm(\theta)$ is a median of $T_n$ for every $%
\theta$ and every $n$.
\end{theorem}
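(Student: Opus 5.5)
The plan is to use the one-term Edgeworth expansion for the distribution function of a standardized sum, which holds here because the summands have a Lebesgue density (so Cramér's condition is satisfied) and, since $\theta$ lies in the interior of the natural parameter space, moments of all orders exist. Fix $\theta\in\Theta$. Write $\sigma^2=V(m)$ and let $\kappa_3=k'''(\theta)$ be the third cumulant of $P_\theta$. The expansion is
\begin{equation*}
\mathbb{P}_\theta\bigl(T_n-nm\leq x\sigma\sqrt n\bigr)
=\Phi(x)-\frac{\kappa_3}{6\sigma^3\sqrt n}(x^2-1)\phi(x)+o(n^{-1/2}),
\end{equation*}
uniformly in $x$. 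Setting $x=0$ gives
\begin{equation*}
\mathbb{P}_\theta(T_n\leq nm(\theta))-\tfrac12
=\frac{\kappa_3}{6\sigma^3\sqrt{2\pi}}\,n^{-1/2}+o(n^{-1/2}).
\end{equation*}
By hypothesis \eqref{eq:meanmedian-asymp}, the left side is $o(n^{-1/2})$ along the unbounded set $\mathcal N$. Multiplying by $\sqrt n$ and letting $n\to\infty$ through $\mathcal N$ therefore forces $k'''(\theta)=0$.

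Since $\theta$ was arbitrary, $k'''\equiv0$ on the interval $\Theta$. Hence $k$ is quadratic, $V$ is constant, and the family is Gaussian up to affine transformation. The same conclusion can be phrased through the variance function: $k'''(\theta)=V'(m)V(m)$, so $V'\equiv0$ on $M$. The "in particular" clause follows because an exact median gives a left side of zero. Continuity of $T_n$ means $\mathbb{P}(T_n\le\text{median})=1/2$ exactly, which is trivially $o(n^{-1/2})$. For the converse, $T_n$ in a Gaussian NEF is normal with mean $nm(\theta)$, hence symmetric about it, so $nm(\theta)$ is a median.

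The main obstacle is justifying the Edgeworth expansion with an $o(n^{-1/2})$ remainder, which requires a finite third moment and Cramér's condition $\limsup_{|t|\to\infty}|\varphi_\theta(t)|<1$. Finite moments are automatic because $\theta\in\mathrm{int}\,D$, so the moment generating function exists near $\theta$. Cramér's condition follows from the Riemann--Lebesgue lemma applied to the absolutely continuous law $P_\theta$: $\varphi_\theta(t)\to0$ as $|t|\to\infty$. Given these two facts, the classical theorem (e.g.\ Feller, Vol.~II, XVI.4, Theorem~1) applies directly. I would cite it rather than reprove it. I would also remark that only evaluation at the single point $x=0$ is needed, and that the remainder does not need to be uniform in $\theta$, since $\theta$ is fixed before $n\to\infty$.
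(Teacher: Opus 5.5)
Your proposal is correct and follows the paper's proof. You fix $\theta$ and justify the one-term Edgeworth expansion from the moments available at interior $\theta$ and from Cram\'er's condition via the Riemann--Lebesgue lemma; evaluating at the origin gives $\gamma_1(\theta)\phi(0)/(6\sqrt n)+o(n^{-1/2})=o(n^{-1/2})$ along $\mathcal N$, hence $k'''\equiv0$, and the converse follows from Gaussian symmetry. Citing Feller's XVI.4 Theorem~1 in place of Petrov is fine, and that result needs only the nonlattice property plus a finite third moment, which absolute continuity and interior $\theta$ already supply.
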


\begin{proof}
The converse is immediate from symmetry of the Gaussian law of $T_n$ about $%
nm(\theta)$. For necessity, fix $\theta$ and write $m=k^{\prime }(\theta)$, $%
\sigma^2=k^{\prime \prime }(\theta)$ and $\gamma_1(\theta)=k^{\prime \prime
\prime }(\theta)/\sigma^3$. Set 
\begin{equation*}
Z_{n,\theta}=\frac{T_n-nm}{\sigma\sqrt n},
\end{equation*}
so that \eqref{eq:meanmedian-asymp} reads $\mathbb{P}_\theta(Z_{n,\theta}%
\leq0)-\frac12=o(n^{-1/2})$.

Because $\theta$ lies in the interior of the natural parameter space, the
tilted law has a moment generating function in a neighbourhood of zero and
hence moments of every order. By Assumption~\ref{ass:main} it has an $L^1$
density, so the Riemann--Lebesgue lemma gives $|\varphi_\theta(t)|\to0$ as $%
|t|\to\infty$, which implies Cram\'er's strong nonlattice condition $%
\limsup_{|t|\to\infty}|\varphi_\theta(t)|<1$. The one-term Edgeworth
expansion for i.i.d.\ sums under these moment and Cram\'er conditions, as in
Petrov \cite[Chapter~VI, Theorem~7]{Petrov1975}, yields 
\begin{equation}  \label{eq:EdgeworthCDF}
\mathbb{P}_\theta(Z_{n,\theta}\leq z) = \Phi(z) + \frac{\gamma_1(\theta)}{%
6\sqrt n}(1-z^2)\phi(z) +o(n^{-1/2}),
\end{equation}
uniformly in $z$. Evaluating at $z=0$ and combining with %
\eqref{eq:meanmedian-asymp}, 
\begin{equation*}
o(n^{-1/2}) = \mathbb{P}_\theta(Z_{n,\theta}\leq0)-\frac12 = \frac{%
\gamma_1(\theta)\phi(0)}{6\sqrt n} +o(n^{-1/2}) \qquad(n\in\mathcal{N}).
\end{equation*}
Multiplying by $\sqrt n$ and letting $n\to\infty$ through $\mathcal{N}$
yields $\gamma_1(\theta)=0$, hence $k^{\prime \prime \prime }(\theta)=0$.
Since $\theta\in\Theta$ was arbitrary, $k$ is quadratic and the NEF is
Gaussian.
\end{proof}

\begin{remark}
\label{rem:meanmedian-scope} Theorem~\ref{thm:meanmedian-samplesize} is what
the sample-size argument actually proves: no property of the ET or LR
orderings is used beyond the single consequence that the mean of $T_n$ is a
median. Stating it separately also makes the connection with the
mean--median problem of Letac, Mattner and Piccioni \cite%
{LetacMattnerPiccioni2018} explicit, and shows that exact equality of mean
and median is more than is needed --- the $o(n^{-1/2})$ rate in %
\eqref{eq:meanmedian-asymp} suffices.
\end{remark}

\begin{theorem}[Sample-size-stable ET--LR coincidence]
\label{thm:sample-ETLR} Suppose Assumption~\ref{ass:main} holds, and let $%
\mathcal{N}\subset\mathbb{N}$ be unbounded. For every $n\in\mathcal{N}$,
define $p_{\mathrm{ET},n,\theta}$ and $p_{\mathrm{LR},n,\theta}$ from the
null distribution of $T_n$. Then 
\begin{equation*}
p_{\mathrm{ET},n,\theta}=p_{\mathrm{LR},n,\theta} \quad \hbox{for every }%
\theta\in\Theta\hbox{ and every }n\in\mathcal{N }
\end{equation*}
if and only if the NEF is Gaussian up to affine transformation.
\end{theorem}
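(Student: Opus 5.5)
The plan is to reduce the statement to Theorem~\ref{thm:meanmedian-samplesize}, using the first conclusion of Theorem~\ref{thm:ETLR-ODE} applied to the NEF of $T_n$. The sufficiency direction is the converse half of Corollary~\ref{cor:ETLR-PVF}(a), transported to $T_n$. If the original NEF is Gaussian up to affine transformation, then $T_n$ generates the Gaussian NEF with $k_n=nk$ quadratic. Each null law of $T_n$ is symmetric about its mean $nm(\theta)$, and the GLR statistic $r$ is then an even function of $t-nm(\theta)$, since it is quadratic in the Gaussian case. Hence the LR level sets are symmetric tails, which coincide with the equal-tail regions, and $p_{\mathrm{ET},n,\theta}=p_{\mathrm{LR},n,\theta}$ for every $n$, not merely for $n\in\mathcal N$.

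For necessity, fix $n\in\mathcal N$ and consider the NEF $F(\mu^{*n})$ of \eqref{eq:TnNEF}. I would not invoke Theorem~\ref{thm:ETLR-ODE} as a black box, since its full hypotheses (a $C^2$ density for $\mu^{*n}$ with $S=M$) need checking. Instead I would use only the elementary argument behind its part (i), which needs much less.
\begin{itemize}
\item The convolution $\mu^{*n}$ has a Lebesgue density $h^{*n}$, strictly positive on $nS$, so the null CDF of $T_n$ is continuous and strictly increasing.
\item The family is steep with mean domain $nM=nS$, which equals the interior of the convex support of $\mu^{*n}$. So the unrestricted MLE at $t$ is $\psi(t/n)$, and the log-GLR $r_{n,\theta}(t)=(\theta-\psi(t/n))t-nk(\theta)+nk(\psi(t/n))$ is strictly concave with its unique maximum at $t=nm(\theta)$.
\item By the two-branch argument of Lemma~\ref{lem:four-exact}, $p_{\mathrm{LR},n,\theta}(nm(\theta))=1$.
\item Equality with $p_{\mathrm{ET},n,\theta}$ then forces $2\min\{F(nm),1-F(nm)\}=1$, that is, $\mathbb P_\theta(T_n\le nm(\theta))=\tfrac12$.
\end{itemize}

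This holds for every $\theta\in\Theta$ and every $n\in\mathcal N$. Hypothesis \eqref{eq:meanmedian-asymp} therefore holds trivially, with the left side identically zero along $\mathcal N$. Theorem~\ref{thm:meanmedian-samplesize} then yields that the NEF is Gaussian up to affine transformation.

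The only real obstacle is confirming that the LR construction for $T_n$ is well defined under Assumption~\ref{ass:main} alone. This means checking positivity and continuity of the density of $T_n$ on the open interval $nS$, and steepness of $F(\mu^{*n})$. Both are routine.
\begin{itemize}
\item Positivity and continuity: the convolution of a strictly positive density with itself is strictly positive on the Minkowski sum, and is continuous there.
\item Steepness: it follows from steepness of $F(\mu)$, since the convex support scales by $n$ and so does $M$.
\end{itemize}
No Edgeworth input beyond what is already contained in Theorem~\ref{thm:meanmedian-samplesize} is needed, and the variance-function restriction of Theorem~\ref{thm:ETLR-ODE}(ii) plays no role.
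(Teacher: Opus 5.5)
Your proposal is correct and follows the same route as the paper. Since $nm(\theta)$ is the unique maximizer of the log-GLR, $p_{\mathrm{LR},n,\theta}(nm(\theta))=1$; ET--LR equality then forces $\mathbb{P}_\theta(T_n\le nm(\theta))=\tfrac12$ for every $\theta$ and $n\in\mathcal N$, and Theorem~\ref{thm:meanmedian-samplesize} gives Gaussianity.

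One side remark of yours is inaccurate but harmless. The density of $T_n$ need not be continuous on $nS$: for $h(x)=x^{-1/2}(1-x)^{-1/2}$ on $(0,1)$ one gets $h*h(1)=\int_0^1 dx/\{x(1-x)\}=\infty$. The argument only uses absolute continuity and a.e.\ positivity of the law of $T_n$, so it is unaffected.
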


\begin{proof}
{The Gaussian converse is immediate. For necessity, write $m=k^{\prime
}(\theta)$. The LR statistic based on $T_n$ has its unique maximum at $T_n=nm
$, so $p_{\mathrm{LR},n,\theta}(nm)=1$; ET--LR equality therefore gives $p_{%
\mathrm{ET},n,\theta}(nm)=1$, that is 
\begin{equation*}
\mathbb{P}_\theta(T_n\leq nm)=\frac12 \qquad(n\in\mathcal{N}),
\end{equation*}
for every $\theta\in\Theta$. This is hypothesis \eqref{eq:meanmedian-asymp}
of Theorem~\ref{thm:meanmedian-samplesize}, which yields the conclusion.}
\end{proof}

\begin{assumption}[Differentiated local Edgeworth condition]
\label{ass:density-edgeworth} For the density--LR sample-size statement, fix 
$\theta$ and let $g_{n,\theta}$ denote the density of $Z_{n,\theta}$. Assume
that for the sample sizes under consideration the marginal density ordering
of $T_n$ is exact in the sense of Definition~\ref{def:dens}, that $%
g_{n,\theta}$ is differentiable near zero, and that 
\begin{equation}  \label{eq:EdgeworthDensityDeriv}
g_{n,\theta}^{\prime }(0) = -\frac{\gamma_1(\theta)}{2\sqrt n}\phi(0)
+o(n^{-1/2})
\end{equation}
as $n\to\infty$ along the specified sequence. This is the derivative at zero
of the first local Edgeworth density expansion. It is stronger than the
distribution-function expansion used in Theorem~\ref{thm:sample-ETLR}; local
Edgeworth theory gives primitive sufficient smoothness and
Fourier-integrability conditions, but we state the exact asymptotic
condition needed below rather than claiming that it follows from Assumption~%
\ref{ass:main} alone.
\end{assumption}

\begin{corollary}[Conditional sample-size-stable density--LR coincidence]
\label{thm:sample-densLR} Suppose Assumptions~\ref{ass:main} and~\ref%
{ass:density-edgeworth} hold, and let $\mathcal{N}\subset\mathbb{N}$ be
unbounded. If 
\begin{equation*}
p_{\mathrm{dens},n,\theta}=p_{\mathrm{LR},n,\theta} \quad \hbox{for every }%
\theta\in\Theta\hbox{ and every }n\in\mathcal{N},
\end{equation*}
where density ordering is the marginal ordering of $T_n$ specified in Remark~%
\ref{rem:Tn-density}, then the NEF is Gaussian up to affine transformation.
The converse holds.
\end{corollary}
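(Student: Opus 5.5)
The plan mirrors the proof of Theorem~\ref{thm:sample-ETLR}: replace the mean--median consequence with a mean--mode consequence, then use the differentiated Edgeworth hypothesis in place of the distribution-function expansion.

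\emph{Converse.} In a Gaussian NEF, $T_n$ is Gaussian and symmetric about $nm$. Its density is unimodal, and the GLR statistic is a decreasing function of $|T_n-nm|$. Hence the density-ordered and LR regions are the same symmetric tails, and the two $p$-values coincide for every $n$.

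\emph{Necessity, step 1: the mode of $T_n$ sits at its mean.} Fix $\theta$, $n\in\mathcal N$, and write $m=k'(\theta)$. As in Section~2, the log GLR for $T_n$ is strictly concave in $t$ with unique maximum at $t=nm$, so $p_{\mathrm{LR},n,\theta}(nm)=1$. Equality with the density-ordered $p$-value gives
\[
\mathbb P_\theta\{f_{T_n}(T_n)\le f_{T_n}(nm)\}=1 .
\]
Suppose the density of $T_n$ exceeded $f_{T_n}(nm)$ at some point $t_0$. The density of $T_n$ is continuous: it is a convolution of the $L^1$ density $h e^{\theta x}$ with an $L^1$ density when $n\ge2$, and for $n=1$ it is $C^2$ by Assumption~\ref{ass:main}. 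So it would exceed $f_{T_n}(nm)$ on a neighbourhood of $t_0$, which has positive null probability, a contradiction. Thus $nm$ is a global maximizer of the density of $T_n$.

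\emph{Step 2: translate to $Z_{n,\theta}$.} The density $g_{n,\theta}$ of $Z_{n,\theta}$ is an affine rescaling of the density of $T_n$, so $g_{n,\theta}$ has a maximum at $0$. Assumption~\ref{ass:density-edgeworth} makes $g_{n,\theta}$ differentiable near $0$, so the interior maximum gives
\[
g_{n,\theta}'(0)=0 \qquad\text{for all } n\in\mathcal N .
\]
Inserting this into \eqref{eq:EdgeworthDensityDeriv} yields
\[
0=-\frac{\gamma_1(\theta)}{2\sqrt n}\,\phi(0)+o(n^{-1/2}).
\]
Multiplying by $\sqrt n$ and letting $n\to\infty$ through $\mathcal N$ gives $\gamma_1(\theta)=0$, that is, $k'''(\theta)=0$.

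\emph{Conclusion.} Since $\theta$ was arbitrary, $k'''\equiv0$ on the interval $\Theta$. Hence $k$ is quadratic, and the NEF is Gaussian up to affine transformation, exactly as at the end of Theorem~\ref{thm:meanmedian-samplesize}.

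The main obstacle I expect is step~1: passing rigorously from equality of the $p$-value functions to ``$nm$ is a density maximizer''. This needs continuity of the $T_n$ density, or at least an argument up to null sets. I would handle it by the continuity argument above, which rests on the exactness clause of Assumption~\ref{ass:density-edgeworth} and the differentiability of $g_{n,\theta}$ near zero. Once that is settled, the remaining steps are routine.
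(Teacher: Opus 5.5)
Your proposal follows the paper's proof essentially step for step: $p_{\mathrm{LR},n,\theta}(nm)=1$ forces $p_{\mathrm{dens},n,\theta}(nm)=1$, so the density of $T_n$ is maximal at $nm$; hence $g_{n,\theta}'(0)=0$, and comparison with \eqref{eq:EdgeworthDensityDeriv} along $\mathcal N$ gives $\gamma_1(\theta)=0$ for every $\theta$. One correction in Step~1: a convolution of two $L^1$ densities need not be continuous, so your justification of global continuity of the $T_n$ density fails, but you do not need it --- running your contradiction argument only for $t_0$ in the neighbourhood of $nm$ where Assumption~\ref{ass:density-edgeworth} gives differentiability shows that $nm$ is a local maximum, which is all that $g_{n,\theta}'(0)=0$ requires.
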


\begin{proof}
Fix $\theta$. At $T_n=nm$ the LR $p$-value equals one. Equality therefore
gives 
\begin{equation*}
p_{\mathrm{dens},n,\theta}(nm)=1.
\end{equation*}
Thus the marginal null density of $T_n$ attains its global maximum at $nm$.
Since the density is differentiable there and $nm$ is an interior point, 
\begin{equation*}
g_{n,\theta}^{\prime }(0)=0 \qquad(n\in\mathcal{N})
\end{equation*}
on the standardized scale. Combining this with %
\eqref{eq:EdgeworthDensityDeriv}, multiplying by $\sqrt n$, and letting $%
n\to\infty$ through $\mathcal{N}$ gives 
\begin{equation*}
\gamma_1(\theta)=0.
\end{equation*}
Thus $k^{\prime \prime \prime }(\theta)=0$ for every $\theta$, and the NEF
is Gaussian. Conversely, for a Gaussian sample, the sufficient-statistic
density is symmetric and all four orderings coincide for every $n$.
\end{proof}

\begin{remark}
\label{rem:sample-strength} Theorem~\ref{thm:sample-ETLR} is an
unconditional sample-size-stability result under Assumption~\ref{ass:main}.
Corollary~\ref{thm:sample-densLR} is intentionally conditional on the
differentiated local Edgeworth expansion \eqref{eq:EdgeworthDensityDeriv}.
Neither result converts the unresolved unrestricted $n=1$ ET--LR or
density--LR problem into a claim without proof.
\end{remark}

\section{Explicit examples and decision consequences}

\label{sec:examples}

\subsection{The L\'evy/inverse-Gaussian case}

{\ The positive $1/2$-stable density is 
\begin{equation}  \label{eq:Levy}
g_{1/2}(x)=\frac{1}{2\sqrt\pi}x^{-3/2}\exp\left(-\frac{1}{4x}\right), \qquad
x>0,
\end{equation}
with Laplace transform 
\begin{equation*}
\int_0^\infty e^{\theta x}g_{1/2}(x)\,dx =\exp\{-\sqrt{-\theta}\}, \qquad
\theta<0.
\end{equation*}
The corresponding NEF density is 
\begin{equation*}
f_\theta(x) =\exp\{\theta x+\sqrt{-\theta}\}g_{1/2}(x).
\end{equation*}
It is an inverse-Gaussian family with 
\begin{equation*}
\mu=\frac{1}{2\sqrt{-\theta}}, \qquad \lambda=\frac12.
\end{equation*}
In the conventional $IG(\mu,\lambda)$ parameterization, 
\begin{equation*}
f(x)=\left(\frac{\lambda}{2\pi x^3}\right)^{1/2} \exp\left\{-\frac{%
\lambda(x-\mu)^2}{2\mu^2x}\right\}, \qquad x>0.
\end{equation*}
Its CDF is 
\begin{equation}  \label{eq:IGcdf}
F_{\mu,\lambda}(x) =\Phi\left(\sqrt{\frac{\lambda}{x}}\left(\frac{x}{\mu}%
-1\right)\right) +\exp\left(\frac{2\lambda}{\mu}\right) \Phi\left(-\sqrt{%
\frac{\lambda}{x}}\left(\frac{x}{\mu}+1\right)\right).
\end{equation}
Therefore 
\begin{equation*}
p_{\mathrm{ET}}(x)=2\min\{F_{\mu,\lambda}(x),1-F_{\mu,\lambda}(x)\}.
\end{equation*}
}

For the L\'evy-generated case $\lambda=1/2$, the log generalized likelihood
ratio against the unrestricted mean is 
\begin{equation*}
\log\Lambda_\mu(x) =-\frac{(x-\mu)^2}{4\mu^2x}.
\end{equation*}
The point on the opposite side of $\mu$ having the same LR statistic is 
\begin{equation*}
x^*=\frac{\mu^2}{x}.
\end{equation*}
Since the inverse-Gaussian family belongs to the Bar-Lev--Bshouty--Letac
triple, 
\begin{equation*}
p_{\mathrm{UMPU}}=p_{\mathrm{LR}},
\end{equation*}
and hence 
\begin{equation}  \label{eq:IGlr}
p_{\mathrm{LR}}(x)=p_{\mathrm{UMPU}}(x)
=F_{\mu,1/2}\!\left(\min\{x,\mu^2/x\}\right)
+1-F_{\mu,1/2}\!\left(\max\{x,\mu^2/x\}\right).
\end{equation}

The numerical difference is already substantial at $\mu=1$. For example, 
\begin{equation*}
x=0.5:\qquad p_{\mathrm{ET}}\approx0.9803, \qquad p_{\mathrm{LR}}=p_{\mathrm{%
UMPU}}\approx0.6171.
\end{equation*}
At the null mean $x=1$, 
\begin{equation*}
p_{\mathrm{LR}}(1)=p_{\mathrm{UMPU}}(1)=1, \qquad p_{\mathrm{ET}%
}(1)\approx0.5724,
\end{equation*}
because the asymmetric inverse-Gaussian null has mean different from its
median. Thus exactness of both constructions does not imply numerical
agreement.

\subsection{The hyperbolic-secant NEF}

{\ A particularly convenient smooth steep example outside the
normal--gamma--inverse-Gaussian triple is obtained from the standard
hyperbolic-secant density 
\begin{equation}  \label{eq:HSbase}
h(x)=\frac{1}{2\cosh(\pi x/2)}, \qquad x\in\mathbb{R}.
\end{equation}
Its moment generating function is $\sec\theta$ for $|\theta|<\pi/2$; see,
for example, Fischer \cite{Fischer2014}. Hence the NEF has 
\begin{equation*}
\Theta=\left(-\frac\pi2,\frac\pi2\right), \qquad k(\theta)=-\log\cos\theta,
\end{equation*}
\begin{equation*}
m(\theta)=\tan\theta, \qquad V(m)=1+m^2.
\end{equation*}
Because $m(\Theta)=\mathbb{R}$, the family is steep. }

There is also a convenient exact CDF representation. Put 
\begin{equation*}
U=\frac{e^{\pi X}}{1+e^{\pi X}}.
\end{equation*}
A direct change of variables shows that under $P_\theta$, 
\begin{equation*}
U\sim \mathrm{Beta}\left(\frac12+\frac{\theta}{\pi}, \frac12-\frac{\theta}{%
\pi}\right).
\end{equation*}
Consequently, 
\begin{equation}  \label{eq:HScdf}
F_\theta(x) =I_{\func{logistic}(\pi x)} \left(\frac12+\frac{\theta}{\pi},
\frac12-\frac{\theta}{\pi}\right),
\end{equation}
which is evaluated in base R by combining \texttt{pbeta} and \texttt{plogis} 
\cite{RCore2026}.

The tilted density is 
\begin{equation*}
f_\theta(x) =\frac{\cos\theta\,e^{\theta x}}{2\cosh(\pi x/2)}.
\end{equation*}
Its unique mode is 
\begin{equation}  \label{eq:HSmode}
r_\theta =\frac{2}{\pi}\func{arctanh}\left(\frac{2\theta}{\pi}\right).
\end{equation}
Thus Assumption~\ref{ass:unimodal} holds.

All four $p$-values can now be computed without a specialized package. The
equal-tail value follows directly from \eqref{eq:HScdf}. For the density
value, if $x$ is on one side of the mode, solve on the opposite side for $y_{%
\mathrm{dens}}$ satisfying 
\begin{equation*}
f_\theta(y_{\mathrm{dens}})=f_\theta(x),
\end{equation*}
and then set 
\begin{equation*}
p_{\mathrm{dens}}(x) =F_\theta(\min\{x,y_{\mathrm{dens}}\})
+1-F_\theta(\max\{x,y_{\mathrm{dens}}\}).
\end{equation*}
For the UMPU value define the centered partial moment 
\begin{equation}  \label{eq:Hcentered}
H_\theta(z)=\int_{-\infty}^z(t-m(\theta))f_\theta(t)\,dt.
\end{equation}
The opposite UMPU boundary $y_{\mathrm{UMPU}}$ is the solution on the other
side of $m(\theta)$ of 
\begin{equation*}
H_\theta(y_{\mathrm{UMPU}})=H_\theta(x),
\end{equation*}
which is exactly the unbiasedness equation. Finally, since $\psi(x)=\arctan
x $, the log LR statistic is 
\begin{equation}  \label{eq:HSlr}
r_\theta(x) =(\theta-\arctan x)x +\log\frac{\cos\theta}{\cos(\arctan x)}.
\end{equation}
The opposite LR boundary solves $r_\theta(y_{\mathrm{LR}})=r_\theta(x)$.

Figure~\ref{fig:HSplot} uses $\theta_0=\pi/4$. Then 
\begin{equation*}
m_0=1, \qquad r_{\theta_0}=\frac{2}{\pi}\func{arctanh}(1/2) \approx0.349699,
\end{equation*}
and the transformed beta parameters are $(3/4,1/4)$. The four exact $p$%
-values are visibly different.

\begin{figure}[htbp]
\centering
\includegraphics[width=.88%
\textwidth,draft=false]{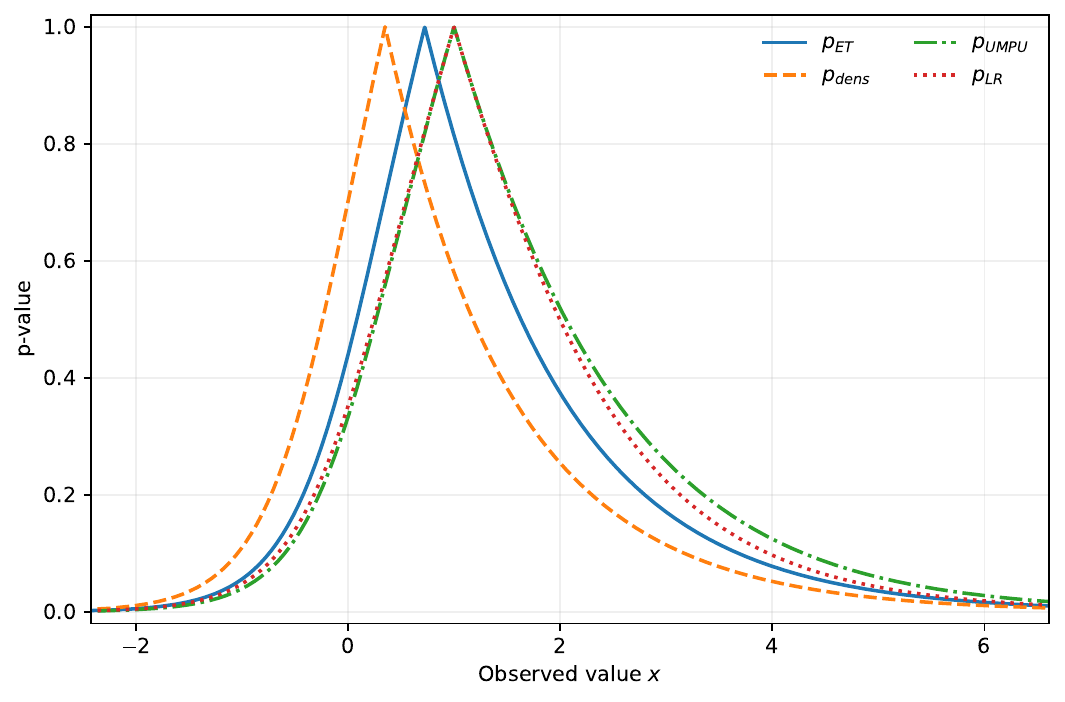}
\caption{Equal-tail, density-ordered, UMPU and likelihood-ratio $p$-values
in the hyperbolic-secant NEF at $\protect\theta_0=\protect\pi/4$. The
supplementary base-R file \texttt{hyperbolic\_secant\_pvalues.R} reproduces
the curves, critical intervals, discrepancy calculations,
decision-disagreement probabilities, and power table.}
\label{fig:HSplot}
\end{figure}

{\ On a dense grid spanning null quantiles $0.0005$ through $0.9995 $, the
largest observed pairwise differences are approximately 
\begin{equation*}
\begin{array}{c|c}
\text{pair} & \text{largest grid difference} \\ \hline
p_{\mathrm{ET}},p_{\mathrm{dens}} & 0.291 \\ 
p_{\mathrm{ET}},p_{\mathrm{UMPU}} & 0.186 \\ 
p_{\mathrm{ET}},p_{\mathrm{LR}} & 0.186 \\ 
p_{\mathrm{dens}},p_{\mathrm{UMPU}} & 0.442 \\ 
p_{\mathrm{dens}},p_{\mathrm{LR}} & 0.431 \\ 
p_{\mathrm{UMPU}},p_{\mathrm{LR}} & 0.034%
\end{array}%
\end{equation*}
These are grid-based descriptive values, not asserted analytic suprema.
Their purpose is to show that the choice of exact two-sided ordering can be
practically material even in a smooth one-parameter NEF. }

For reproducibility, the supplementary R calculation uses 800 equally spaced
probability points between null quantiles $0.0005$ and $0.9995$. Root
finding is performed with \texttt{uniroot}; numerical integration for the
UMPU centered moment uses relative tolerance $10^{-10}$, and the main root
tolerances are $10^{-10}$, with $10^{-8}$ for the nested UMPU root because
each evaluation itself contains a numerical integral. The numerical values
reported below are outputs of that script rather than inputs to the
analytical arguments.

The differences are also visible at the level of testing decisions. For two
procedures $i$ and $j$, define the null disagreement probability at level $%
\alpha$ by 
\begin{equation}  \label{eq:decision-disagreement}
D_{ij}(\theta_0;\alpha) = \mathbb{P}_{\theta_0} \left( \mathbf{1}%
\{p_i(X)\leq\alpha\} \neq \mathbf{1}\{p_j(X)\leq\alpha\} \right).
\end{equation}
At $\theta_0=\pi/4$ and $\alpha=0.05$, the numerically determined critical
intervals are 
\begin{equation*}
\begin{array}{c|cc}
\text{procedure} & \text{left critical point} & \text{right critical point}
\\ \hline
\mathrm{ET} & -1.048123 & 4.563125 \\ 
\mathrm{dens} & -1.343658 & 4.049527 \\ 
\mathrm{UMPU} & -0.900121 & 5.223057 \\ 
\mathrm{LR} & -0.980326 & 4.798729%
\end{array}%
\end{equation*}
and the corresponding null disagreement probabilities are 
\begin{equation*}
\begin{array}{c|c}
\text{pair} & D_{ij}(\theta_0;0.05) \\ \hline
\mathrm{ET,\ dens} & 0.02484 \\ 
\mathrm{ET,\ UMPU} & 0.02022 \\ 
\mathrm{ET,\ LR} & 0.00845 \\ 
\mathrm{dens,\ UMPU} & 0.04507 \\ 
\mathrm{dens,\ LR} & 0.03329 \\ 
\mathrm{UMPU,\ LR} & 0.01178%
\end{array}%
\end{equation*}
Thus, although every procedure has exact null rejection probability $0.05$,
density ordering and UMPU make different decisions on about $4.5\%$ of null
observations in this example.

With the null fixed at $\theta_0=\pi/4$ and size $0.05$, representative
rejection probabilities under alternatives are 
\begin{equation*}
\begin{array}{c|cccc}
\theta & \mathrm{ET} & \mathrm{dens} & \mathrm{UMPU} & \mathrm{LR} \\ \hline
0 & 0.1217 & 0.0779 & 0.1520 & 0.1348 \\ 
\pi/6 & 0.0523 & 0.0366 & 0.0648 & 0.0575 \\ 
\pi/4 & 0.0500 & 0.0500 & 0.0500 & 0.0500 \\ 
\pi/3 & 0.0996 & 0.1202 & 0.0796 & 0.0918 \\ 
1.25 & 0.2331 & 0.2706 & 0.1926 & 0.2177%
\end{array}%
\end{equation*}
The UMPU test has power at least its size on both sides by construction. The
other exact tests need not be unbiased. {The table also shows that, in this
example, none of ET, density ordering, or LR is unbiased. Indeed, at $%
\theta=\pi/3$ each has strictly larger power than the UMPU test while all
have the same null size $0.05$. Since the UMPU test is uniformly most
powerful within the class of unbiased level-$0.05$ tests, any competing test
with larger power at even one alternative cannot itself belong to that
unbiased class.} These numerical examples are illustrations rather than
evidence for the analytical theorems.

\section{Discussion}

{\ The one-sided and two-sided situations should be sharply distinguished.
In a continuous one-parameter exponential family with monotone likelihood
ratio, a directed alternative determines a natural ordering, and tail, UMP,
UMPU and LR inversion agree. For a two-sided simple null, there is no
corresponding universal ordering under asymmetry. }

The first structural contribution is the local-to-global separation in
Theorem~\ref{thm:local} and Corollary~\ref{cor:global}. At a fixed null
parameter, coincidence of UMPU with equal-tail inversion is exactly a
symmetry statement about that null law; under the regular density-level
geometry, the same is true for the two density-related pairings. Requiring
the same local symmetry at every exponential tilt forces a quadratic
cumulant and hence the Gaussian NEF. This is distinct from the previously
known theorem of Bar-Lev, Bshouty and Letac \cite{BarLevBshoutyLetac2002},
where UMPU and LR inversion select the larger
normal--gamma--inverse-Gaussian triple.

The second contribution concerns the LR edges that are not covered by that
earlier characterization. We deliberately do not assert a complete
unrestricted one-observation ET--LR or density--LR theorem. For ET--LR,
Theorem~\ref{thm:ETLR-ODE} gives a new necessary variance-function equation, 
\begin{equation*}
9V^2V^{\prime \prime \prime }-9VV^{\prime }V^{\prime \prime }+4(V^{\prime
})^3=0,
\end{equation*}
equivalently $(V^{2/3})^{\prime \prime \prime }=0$. This equation is not the
differential equation arising in the Bar-Lev--Bshouty--Letac UMPU--LR proof:
their analysis leads to different variance-function identities associated
with unbiasedness of the LR regions. The present equation is obtained
instead from the quantile symmetry forced by ET--LR equality. Corollary~\ref%
{cor:ETLR-PVF} gives a complete conclusion in the power-variance subclass,
while Remark~\ref{rem:meanmedian} explains why we do not promote this to an
unrestricted one-observation characterization without further work.

The third contribution is the sample-size formulation. The canonical
statistic for an i.i.d. sample is $T_n=\sum_iX_i$, whose cumulant is $nk$
and whose variance function is $V_n(t)=nV(t/n)$. Proposition~\ref%
{prop:finite-n} transfers the established UMPU comparisons to any fixed
sample size. Its proof uses the specific constant, shifted-square and
shifted-cube variance-function forms of the normal, gamma and
inverse-Gaussian families; polynomial degree alone would not suffice. More
importantly, Theorem~\ref{thm:sample-ETLR} shows that persistence of ET--LR
coincidence along any unbounded sequence of sample sizes forces zero
skewness at every tilt and hence Gaussianity.

The density--LR sample-size statement is intentionally weaker in its
hypotheses. Corollary~\ref{thm:sample-densLR} is conditional on the
differentiated local Edgeworth expansion \eqref{eq:EdgeworthDensityDeriv}.
We do not present that expansion as a consequence of Assumption~\ref%
{ass:main} alone. Under primitive Fourier-integrability and smoothness
conditions it follows from standard local Edgeworth theory, but the precise
weakest conditions are not part of the present paper. Stating the expansion
explicitly prevents the asymptotic regularity from being hidden inside an
informal phrase such as ``standard smoothness assumptions.''

The distinction in Remark~\ref{rem:Tn-density} is essential. Density
ordering of the one-dimensional sufficient statistic $T_n$ is not the same
as ranking the full sample by the joint null density. The former is the
natural extension of the canonical-coordinate density $p$-value studied here
and is invariant under replacing $T_n$ by $\overline X_n$; the latter
depends on ancillary sample configuration through $\prod_i h(X_i)$ and is a
different procedure.

The numerical examples sharpen the practical message. Reporting merely ``the
two-sided $p$-value'' can conceal a genuine choice of ordering. In the
hyperbolic-secant example, all four constructions are exact at the null, yet
the level-$0.05$ rejection regions differ enough that some pairs disagree on
several percent of null observations, and the corresponding power functions
differ as well. The supplementary R script reproduces the plotted curves,
grid discrepancies, critical intervals, disagreement probabilities and power
values reported in Section~\ref{sec:examples}.

Proposition~\ref{prop:distance} is intentionally secondary. It is a generic
calibration benchmark about two uniform null marginals, not an NEF
characterization. Its role is simply to show that exactness alone gives no
useful pointwise proximity; the model-specific discrepancy and decision
calculations provide the statistically relevant complement.

Several questions remain. Most notably, a complete unrestricted $n=1$
classification of ET--LR and density--LR coincidence outside the
power-variance or sample-size-stable settings would be of interest. The
quadratic-root restriction \eqref{eq:Vquadraticroot}, together with the
density identity \eqref{eq:l1} {in the global form \eqref{eq:f-at-mean} of
Proposition~\ref{prop:h-from-V}}, supplies a concrete starting point for the
unrestricted ET--LR problem. Beyond one-parameter NEFs, analogous questions
arise in multivariate exponential families, invariant testing, discrete
models with randomization, and nuisance-parameter problems. In each case the
central issue is not merely whether a $p$-value can be calibrated, but
whether distinct calibration-compatible orderings encode the same notion of
extremeness.

\appendix

\section{Symmetry and nesting of UMPU tests}

\begin{lemma}[Equal-tail unbiasedness and symmetry]
\label{lem:ET-unbiased} {Let $\mathcal{F}(\mu)$ be a one-parameter NEF
satisfying Assumption~\ref{ass:main}, fix $\theta_0\in\Theta$, and let $%
X\sim P_{\theta_0}$ be the corresponding null member, with continuous
strictly increasing CDF $F$, finite mean $m=k^{\prime }(\theta_0)$ and
quantile function $q_u=F^{-1}(u)$. Unbiasedness below is understood relative
to this family.} For $0<p<1/2$ define 
\begin{equation*}
R_p=(-\infty,q_p]\cup[q_{1-p},\infty).
\end{equation*}
In a one-parameter NEF, the level-$2p$ test $\mathbf{1}\{X\in R_p\}$ is
unbiased for the two-sided alternative if and only if 
\begin{equation}  \label{eq:ETunb}
\mathbb{E}\{X\mathbf{1}(X\in R_p)\}=2pm.
\end{equation}
Moreover, \eqref{eq:ETunb} holds for every $p\in(0,1/2)$ if and only if $F$
is symmetric about $m$.
\end{lemma}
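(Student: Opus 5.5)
Two things must be shown: that unbiasedness of the test $\mathbf{1}\{X\in R_p\}$ is equivalent to \eqref{eq:ETunb}, and that \eqref{eq:ETunb} for every $p\in(0,1/2)$ is equivalent to symmetry of $F$ about $m$.

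\emph{First claim.} Consider the power function
\begin{equation*}
\beta(\theta)=\int_{R_p} e^{(\theta-\theta_0)x-k(\theta)+k(\theta_0)}f_{\theta_0}(x)\,dx .
\end{equation*}
\begin{enumerate}[label=(\alph*)]
\item It is real analytic on $\Theta$, and $\beta(\theta_0)=2p$.
\item Differentiation under the integral sign is justified because $\theta_0$ is interior, so moments of all orders exist. It gives
\begin{equation*}
\beta'(\theta_0)=\mathbb{E}_{\theta_0}\{(X-m)\mathbf{1}(X\in R_p)\}.
\end{equation*}
\item If the test is unbiased, then $\beta$ has a minimum at the interior point $\theta_0$. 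Hence $\beta'(\theta_0)=0$, which is exactly \eqref{eq:ETunb}.
\item For the converse, the indicator $\phi=\mathbf{1}_{R_p}$ is a two-tailed test function. It has the generalized Neyman--Pearson form: it rejects iff $e^{\theta x}$ exceeds a line $c_1+c_2x$ for suitable $\theta\ne\theta_0$. By Lehmann--Romano \S4.2, a two-tailed test satisfying the size and first-moment conditions \eqref{eq:size}--\eqref{eq:unb} is the UMPU test and is unbiased. This is the same standard theorem the paper already quotes.
\item As a self-contained alternative, one can show that $\beta$ is convex in the natural parameter direction. Indeed, $\log\beta(\theta)+k(\theta)$ is convex in $\theta$, and $\beta(\theta)\to 1$ at the boundary of $\Theta$ by steepness. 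So $\beta'(\theta_0)=0$ forces a global minimum. I expect to rely on the cited theorem rather than this route.
\end{enumerate}

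\emph{Second claim, symmetry implies \eqref{eq:ETunb}.} If $F$ is symmetric about $m$, then $q_{1-p}-m=m-q_p$. The centered contributions of the two tails cancel, so $\mathbb{E}\{(X-m)\mathbf{1}_{R_p}\}=0$ for every $p$.

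\emph{Second claim, \eqref{eq:ETunb} for all $p$ implies symmetry.} This is the main step.
\begin{enumerate}[label=(\alph*)]
\item Rewrite \eqref{eq:ETunb} in quantile coordinates using $x=q_u$, $dx=q_u'\,du$, $f(q_u)q_u'=1$:
\begin{equation*}
\int_0^p (q_u-m)\,du+\int_{1-p}^1 (q_u-m)\,du=0,\qquad 0<p<\tfrac12 .
\end{equation*}
\item Both integrals are finite because the mean is finite. The left side is differentiable in $p$, since $q$ is continuous.
\item Differentiating gives $q_p-m+q_{1-p}-m=0$, that is, $q_p+q_{1-p}=2m$ for all $p\in(0,1/2)$.
\item This says $F(m-t)=1-F(m+t)$ for all $t$, which is symmetry about $m$.
\end{enumerate}

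The only delicate point in the whole argument is the converse direction of the first claim, namely that local unbiasedness gives global unbiasedness. I will settle it by citing the standard UMPU theorem, as the paper does around \eqref{eq:unb}.
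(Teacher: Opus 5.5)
Your proposal is correct and follows essentially the same route as the paper. Necessity comes from the vanishing derivative of the power function at the interior point $\theta_0$, which gives \eqref{eq:ETunb}. Sufficiency comes from the standard UMPU theorem (size plus first-moment condition, Lehmann--Romano \S4.2). Symmetry comes from differentiating the quantile form $\int_0^p q_u\,du+\int_{1-p}^1 q_u\,du=2pm$ to obtain $q_p+q_{1-p}=2m$.

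You should drop aside (e), since it is not valid even as a remark. Log-convexity of $\beta e^{k}$ does not make $\beta$ convex; indeed, a convex function on $\Theta=\mathbb{R}$ bounded by $1$ is constant, so the Gaussian power function is not convex. The claim $\beta\to1$ at the boundary also fails in general: for the positive-stable NEF, $P_\theta$ converges to the proper law $g_\alpha(x)\,dx$ as $\theta\uparrow0$.
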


\begin{proof}
If the level-$2p$ equal-tail test is unbiased, its power function has a
minimum at $\theta_0$. Differentiability of the NEF power function therefore
gives the necessary condition 
\begin{equation*}
0=\left.\frac{d}{d\theta}\mathbb{E}_\theta\mathbf{1}\{X\in
R_p\}\right|_{\theta=\theta_0} =\mathbb{E}_{\theta_0}\{(X-m)\mathbf{1}(X\in
R_p)\},
\end{equation*}
which is exactly \eqref{eq:ETunb}.

Conversely, the equal-tail region already has size $2p$. If \eqref{eq:ETunb}
holds, then this two-tail test satisfies both the size and first-moment
equations that characterize the bilateral UMPU test in a one-parameter
exponential family; see Lehmann and Romano \cite[Section~4.2]%
{LehmannRomano2022}. Hence the equal-tail test is UMPU and, in particular,
unbiased. This step uses the standard UMPU theorem; it does not use the
generally false implication that local unbiasedness by itself would imply
global unbiasedness.

Now write $X\overset{d}{=}q_U$ with $U\sim\mathrm{Unif}(0,1)$. Condition %
\eqref{eq:ETunb} for every $p$ becomes 
\begin{equation*}
\int_0^p q_u\,du+\int_{1-p}^1q_u\,du=2pm.
\end{equation*}
Differentiating in $p$ yields 
\begin{equation*}
q_p+q_{1-p}=2m,
\end{equation*}
which is equivalent to symmetry about $m$. The converse follows by
integration.
\end{proof}

\begin{lemma}[A symmetric NEF is Gaussian]
\label{lem:symmetric-NEF} Let $F=\{P_\theta:\theta\in\Theta\}$ be a
nondegenerate NEF with open $\Theta$. If every $P_\theta$ is symmetric about
some center $c(\theta)$, then 
\begin{equation*}
k(\theta)=\frac{\sigma^2}{2}\theta^2+a\theta+b, \qquad \sigma^2>0,
\end{equation*}
and the NEF is Gaussian up to affine transformation. The symmetry center is
necessarily 
\begin{equation*}
c(\theta)=k^{\prime }(\theta).
\end{equation*}
\end{lemma}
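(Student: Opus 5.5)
The plan is to work with the centered law. Set $Y_\theta=X-c(\theta)$ under $P_\theta$. By hypothesis $Y_\theta\overset{d}{=}-Y_\theta$, so the moment generating function of $Y_\theta$ is even on its domain.

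\textbf{Step 1: the center is the mean.} Since $\theta\in\Theta$ is interior, $P_\theta$ has a finite mean $k'(\theta)$. A symmetric law with finite mean has its mean at the symmetry center, so $c(\theta)=k'(\theta)$. This proves the last claim.

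\textbf{Step 2: the odd cumulants vanish.} The cumulant generating function of $P_\theta$ is $s\mapsto k(\theta+s)-k(\theta)$, defined for $s$ in a neighbourhood of $0$. Centering gives
\begin{equation*}
K_\theta(s)=k(\theta+s)-k(\theta)-s\,k'(\theta),
\end{equation*}
which is the cumulant generating function of $Y_\theta$. Symmetry of $Y_\theta$ makes $K_\theta$ even in $s$ near $0$. Hence all its odd derivatives at $s=0$ vanish, and in particular
\begin{equation*}
k'''(\theta)=K_\theta'''(0)=0 .
\end{equation*}

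\textbf{Step 3: $k$ is quadratic.} Step 2 holds at every $\theta\in\Theta$, and $\Theta$ is an open interval, so $k'''\equiv0$ on $\Theta$. Therefore $k(\theta)=\tfrac{\sigma^2}{2}\theta^2+a\theta+b$. Here $\sigma^2=k''>0$ because the NEF is nondegenerate.

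\textbf{Step 4: identify the family.} The Laplace transform $L(\theta)=e^{k(\theta)}$ agrees on an open interval with that of $e^{b}\,N(a,\sigma^2)$, the Gaussian law scaled by mass $e^{b}$. Laplace transforms finite on an open interval determine the measure, so $\mu$ is this scaled Gaussian. Then $F(\mu)$ is the Gaussian NEF, equivalently by Remark~\ref{rem:affine} the standard normal NEF up to affine transformation. Since $\Theta$ is already the whole of $\mathbb{R}$ here, no domain issue arises.

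\textbf{Main obstacle.} The only delicate point is justifying the identification of cumulants with derivatives of $k$ in Step 2, i.e. analyticity of $k$ on the open set $\Theta$. This is standard, so I expect no real obstacle. An alternative route to Step 2 is to differentiate the symmetric identity $E_\theta(X-c)^3=0$, which equals $k'''(\theta)$ directly.
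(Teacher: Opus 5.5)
Your proof is correct and follows the same route as the paper. You identify the symmetry center with the mean $k'(\theta)$, use symmetry to kill the third cumulant $k'''(\theta)$, conclude that $k$ is quadratic with $k''=\sigma^2>0$, and then identify the Gaussian family. Your evenness-of-the-centered-CGF argument and your explicit Laplace-transform uniqueness step spell out what the paper states in one line each. One small wording point: $\Theta=\mathbb{R}$ holds only as a consequence of identifying $\mu$, not ``already'' beforehand.
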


\begin{proof}
Because $\Theta$ is open, all moments exist locally under each $P_\theta$.
Symmetry and integrability imply that the symmetry center equals the mean: 
\begin{equation*}
c(\theta)=\mathbb{E}_\theta X=k^{\prime }(\theta).
\end{equation*}
The third central moment therefore vanishes. In an NEF, the third cumulant
is $k^{\prime \prime \prime }(\theta)$ and equals the third central moment.
Hence 
\begin{equation*}
k^{\prime \prime \prime }(\theta)=0\qquad\hbox{for every }\theta\in\Theta.
\end{equation*}
Thus $k^{\prime \prime }$ is a positive constant, and $k$ is quadratic. The
resulting moment generating function is that of a normal distribution. The
converse is immediate.
\end{proof}

\begin{proposition}[Existence, uniqueness and nesting of the continuous UMPU
critical points]
\label{prop:nesting} Fix $\theta_0$ and let $f_0$ be continuous and strictly
positive on an open support interval $S=(\ell,r)$, with finite mean $m_0\in S
$. For every $\alpha\in(0,1)$ there is a unique pair 
\begin{equation*}
\ell<c_1(\alpha)<m_0<c_2(\alpha)<r
\end{equation*}
satisfying \eqref{eq:size}--\eqref{eq:unb}. Moreover, 
\begin{equation*}
c_1^{\prime }(\alpha)>0, \qquad c_2^{\prime }(\alpha)<0,
\end{equation*}
so the UMPU rejection regions are strictly nested in $\alpha$.
\end{proposition}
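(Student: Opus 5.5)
Write $F=F_0$ and $f=f_0$. For $c_1\in(\ell,m_0)$ define the centred partial moment
\[
H(z)=\int_\ell^z(t-m_0)f(t)\,dt .
\]
Because $f>0$ and the mean is finite, $H(\ell^+)=0$ and $H(r^-)=0$. Moreover $H'(z)=(z-m_0)f(z)$, so $H$ is strictly decreasing on $(\ell,m_0]$ and strictly increasing on $[m_0,r)$, with $H(m_0)<0$. Equation \eqref{eq:unb} can be rewritten as $H(c_1)=H(c_2)$, because $\int_\ell^r(t-m_0)f\,dt=0$.

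The plan is to parametrize the problem by the left endpoint. For each $c_1\in(\ell,m_0)$ the value $H(c_1)$ lies in $(H(m_0),0)$, and the right branch of $H$ is a strictly increasing bijection from $[m_0,r)$ onto $[H(m_0),0)$. Hence there is a unique $c_2=\gamma(c_1)\in(m_0,r)$ with $H(\gamma(c_1))=H(c_1)$. Since $H'\neq0$ off $m_0$, the implicit function theorem gives
\[
\gamma'(c_1)=\frac{(c_1-m_0)f(c_1)}{(\gamma(c_1)-m_0)f(\gamma(c_1))}<0 .
\]
At the two ends, $\gamma(c_1)\to r$ as $c_1\downarrow\ell$ and $\gamma(c_1)\to m_0$ as $c_1\uparrow m_0$.

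Next I would study the size map $A(c_1)=F(c_1)+1-F(\gamma(c_1))$. Its derivative is
\[
A'(c_1)=f(c_1)-f(\gamma)\gamma'=f(c_1)\Bigl(1+\frac{m_0-c_1}{\gamma-m_0}\Bigr)>0 .
\]
The limits are $A\to0$ as $c_1\downarrow\ell$ and $A\to1$ as $c_1\uparrow m_0$. So $A$ is a strictly increasing $C^1$ bijection from $(\ell,m_0)$ onto $(0,1)$. For each $\alpha$ set $c_1(\alpha)=A^{-1}(\alpha)$ and $c_2(\alpha)=\gamma(c_1(\alpha))$. This gives existence.

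Uniqueness has one extra case to rule out. Any solution pair with $c_1<c_2$ and $\alpha\in(0,1)$ must have $c_1<m_0<c_2$: if both points were on the same side of $m_0$, the integrands in \eqref{eq:unb} could not cancel, given strict positivity and $H(\ell^+)=H(r^-)=0$. I would dispose of this case with a short sign argument on $H$. Once $c_1<m_0<c_2$ is known, the pair must equal $(c_1,\gamma(c_1))$ with $A(c_1)=\alpha$, and uniqueness follows.

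Monotonicity then follows from the derivatives: $c_1'(\alpha)=1/A'(c_1)>0$ and $c_2'(\alpha)=\gamma'(c_1)c_1'(\alpha)<0$. Therefore the regions $(\ell,c_1(\alpha)]\cup[c_2(\alpha),r)$ are strictly nested in $\alpha$.

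I expect the main obstacle to be the boundary limits of $\gamma$ and $A$, which are what make $A$ onto $(0,1)$. The limit $\gamma\to r$ as $c_1\downarrow\ell$ needs $H(c_1)\to0^-$, and it needs the right-branch inverse to send values near $0$ to points near $r$. Both follow from $H(r^-)=0$ together with the strict monotonicity of $H$ on each branch.
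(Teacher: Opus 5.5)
Your proposal is correct and is essentially the paper's argument: the paper also fixes the left endpoint and solves the first-moment equation for a unique right endpoint $b(a)>m_0$ using monotonicity of the centred partial moment. It then gets $b'(a)=(a-m_0)f_0(a)/\{(b-m_0)f_0(b)\}<0$ by implicit differentiation and shows the size is strictly monotone with limits $0$ and $1$, working with the acceptance probability $F_0(b(a))-F_0(a)$ where you use the rejection probability $A$. Your explicit ruling-out of same-side solution pairs is a harmless extra, since the statement only asserts uniqueness among pairs with $c_1<m_0<c_2$.
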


\begin{proof}
It is convenient to work with the acceptance interval. For $a\in(\ell,m_0)$
define 
\begin{equation*}
J(a,b)=\int_a^b(x-m_0)f_0(x)\,dx.
\end{equation*}
At $b=m_0$, $J(a,m_0)<0$. As $b\uparrow r$, 
\begin{equation*}
J(a,r) =-\int_\ell^a(x-m_0)f_0(x)\,dx>0,
\end{equation*}
because the total centered first moment is zero. Since $J(a,b)$ is strictly
increasing in $b>m_0$, there is a unique $b=b(a)>m_0$ with $J(a,b(a))=0$.

Differentiate the identity $J(a,b(a))=0$: 
\begin{equation*}
(b-m_0)f_0(b)b^{\prime }(a)-(a-m_0)f_0(a)=0,
\end{equation*}
so 
\begin{equation*}
b^{\prime }(a)=\frac{(a-m_0)f_0(a)}{(b-m_0)f_0(b)}<0.
\end{equation*}
Let 
\begin{equation*}
A(a)=F_0(b(a))-F_0(a)
\end{equation*}
be the acceptance probability. Then 
\begin{equation*}
A^{\prime }(a) =f_0(b)b^{\prime }(a)-f_0(a) =f_0(a)\frac{a-b}{b-m_0}<0.
\end{equation*}
As $a\downarrow\ell$, one has $b(a)\uparrow r$ and $A(a)\uparrow1$; as $%
a\uparrow m_0$, one has $b(a)\downarrow m_0$ and $A(a)\downarrow0$. Hence
for each $\alpha$ there is a unique $a=c_1(\alpha)$ with $A(a)=1-\alpha$,
and $c_2(\alpha)=b(c_1(\alpha))$.

Since $A^{\prime }(a)<0$, 
\begin{equation*}
c_1^{\prime }(\alpha)=-\frac{1}{A^{\prime }(c_1(\alpha))}>0.
\end{equation*}
Because $b^{\prime }(a)<0$, 
\begin{equation*}
c_2^{\prime }(\alpha)=b^{\prime }(c_1(\alpha))c_1^{\prime }(\alpha)<0.
\end{equation*}
Thus increasing the level expands both tails and the rejection regions are
nested.
\end{proof}

\begin{lemma}[Symmetric nulls give equal-tail UMPU regions]
\label{lem:sym-UMPU} If, in Proposition~\ref{prop:nesting}, the null
distribution is symmetric about $m_0$, then 
\begin{equation*}
c_2(\alpha)-m_0=m_0-c_1(\alpha),
\end{equation*}
and each UMPU rejection region is the equal-tail rejection region.
\end{lemma}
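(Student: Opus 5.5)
The proof uses the uniqueness part of Proposition~\ref{prop:nesting} together with a reflection argument. Fix $\alpha\in(0,1)$ and let $(c_1,c_2)=(c_1(\alpha),c_2(\alpha))$ be the unique pair solving \eqref{eq:size}--\eqref{eq:unb}. Set
\begin{equation*}
\tilde c_1=2m_0-c_2,\qquad \tilde c_2=2m_0-c_1 .
\end{equation*}
Because the support interval of a law symmetric about $m_0$ is itself symmetric about $m_0$, we have $\ell<\tilde c_1<m_0<\tilde c_2<r$. The aim is to show that the reflected pair also solves both equations; uniqueness then forces $\tilde c_1=c_1$, which is exactly $c_2-m_0=m_0-c_1$.

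The size equation transfers directly. Symmetry gives $f_0(2m_0-x)=f_0(x)$, so $F_0(2m_0-y)=1-F_0(y)$. Hence
\begin{equation*}
F_0(\tilde c_1)+1-F_0(\tilde c_2)=1-F_0(c_2)+F_0(c_1)=\alpha .
\end{equation*}

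For the first-moment equation it is cleaner to work with the acceptance interval. Equation \eqref{eq:unb} is equivalent to $J(c_1,c_2)=\int_{c_1}^{c_2}(x-m_0)f_0(x)\,dx=0$, because the total centered moment vanishes. Substituting $x=2m_0-y$ gives
\begin{equation*}
J(\tilde c_1,\tilde c_2)=\int_{c_1}^{c_2}(m_0-y)f_0(y)\,dy=-J(c_1,c_2)=0 .
\end{equation*}
So the reflected pair solves both equations. By uniqueness in Proposition~\ref{prop:nesting}, $(\tilde c_1,\tilde c_2)=(c_1,c_2)$.

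To conclude, note that $c_1$ and $c_2$ are now reflections of each other about $m_0$, so symmetry gives $F_0(c_1)=1-F_0(c_2)$. Combined with \eqref{eq:size}, each tail has probability $\alpha/2$. Thus $c_1=q_{\alpha/2}$ and $c_2=q_{1-\alpha/2}$, and the UMPU region coincides with $R^{\mathrm{ET}}_\alpha$ for every $\alpha$.

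There is no real obstacle here; the only point needing care is the appeal to uniqueness. Without it one would have to argue directly that the reflected pair is admissible. An alternative route avoids uniqueness altogether: for symmetric $f_0$ the integrand in $J(a,2m_0-a)$ is odd about $m_0$, so $J(a,2m_0-a)=0$ and the map $b(a)$ from the proof of Proposition~\ref{prop:nesting} is $b(a)=2m_0-a$. The symmetric interval at the right level is then the UMPU acceptance region.
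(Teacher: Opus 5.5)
Your proposal is correct and follows the paper's proof. Both reflect the critical pair to $(2m_0-c_2,\,2m_0-c_1)$, check that the size and first-moment equations are preserved, invoke the uniqueness in Proposition~\ref{prop:nesting}, and then read off equal tails from symmetry. You also supply details the paper leaves implicit: the admissibility of the reflected pair via symmetry of the support, and the change of variables giving $J(\tilde c_1,\tilde c_2)=-J(c_1,c_2)$. One small correction: your alternative route via $b(a)=2m_0-a$ still relies on the uniqueness of $b(a)$ (strict monotonicity of $J(a,\cdot)$), so it avoids only uniqueness of the pair, not uniqueness altogether.
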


\begin{proof}
Reflection $x\mapsto2m_0-x$ preserves the null law. If $(c_1,c_2) $
satisfies the size and unbiasedness equations, then so does $%
(2m_0-c_2,2m_0-c_1)$. Uniqueness in Proposition~\ref{prop:nesting} therefore
forces 
\begin{equation*}
(c_1,c_2)=(2m_0-c_2,2m_0-c_1).
\end{equation*}
Symmetry then gives equal null tail probabilities.
\end{proof}

\section{Positive stable generating laws}

{\ This appendix records only the stable-law facts needed in the paper. For
standard treatments see Feller \cite{Feller1971}, Zolotarev \cite%
{Zolotarev1986}, and Nolan \cite{Nolan2020}. Explicit closed forms for many
rational stability indices are given by Penson and G\'orska \cite%
{PensonGorska2010}. }

Fix $0<\alpha<1$. Let $g_\alpha$ be the positive one-sided $\alpha$-stable
density determined by 
\begin{equation}  \label{eq:stableLT}
\int_0^\infty e^{-s x}g_\alpha(x)\,dx=e^{-s^\alpha}, \qquad s>0.
\end{equation}
Equivalently, with $\theta=-s<0$, 
\begin{equation*}
\int_0^\infty e^{\theta x}g_\alpha(x)\,dx =\exp\{-(-\theta)^\alpha\}.
\end{equation*}
The density is positive and smooth on $(0,\infty)$. The NEF generated by $%
g_\alpha(x)dx$ is 
\begin{equation}  \label{eq:stableNEF}
f_{\alpha,\theta}(x) =\exp\{\theta x+(-\theta)^\alpha\}g_\alpha(x), \qquad
x>0,\quad\theta<0.
\end{equation}
Its cumulant transform is 
\begin{equation*}
k(\theta)=-(-\theta)^\alpha,
\end{equation*}
so 
\begin{equation*}
m(\theta)=\alpha(-\theta)^{\alpha-1},
\end{equation*}
and 
\begin{equation*}
V(m) =(1-\alpha)\alpha^{-1/(1-\alpha)} m^{(2-\alpha)/(1-\alpha)}.
\end{equation*}
The mean domain is $(0,\infty)$, which equals the interior of the convex
support, and therefore the family is steep. These power-variance NEFs belong
naturally to the class studied by Bar-Lev and Enis \cite{BarLevEnis1986}.

At $\alpha=1/2$ one obtains the explicit L\'evy density \eqref{eq:Levy}, and
exponential tilting gives the inverse-Gaussian NEF used in Section~\ref%
{sec:examples}. For $\alpha\neq1/2$, the corresponding positive stable NEFs
provide smooth steep examples outside the normal--gamma--inverse-Gaussian
coincidence triple of \cite{BarLevBshoutyLetac2002}.

\section{Derivative calculation for the ET--LR variance-function equation}

\label{app:ETLRcalc}

{\ We record the algebra used in Theorem~\ref{thm:ETLR-ODE} in a form that
can be checked directly. Fix a mean value $m$ and suppress its argument.
Write 
\begin{equation*}
f=f_{\psi(m)}(m), \qquad \ell_j= \left. \frac{\partial^j}{\partial x^j}\log
f_{\psi(m)}(x) \right|_{x=m}.
\end{equation*}
For the log LR function $r_m$ one has 
\begin{equation*}
r_1:=r_m^{\prime }(m)=0, \qquad r_2:=r_m^{\prime \prime }(m)=-\frac1V,
\qquad r_3:=r_m^{\prime \prime \prime }(m)=\frac{V^{\prime }}{V^2},
\end{equation*}
\begin{equation*}
r_4:=r_m^{(4)}(m) =\frac{V^{\prime \prime }}{V^2}-\frac{2(V^{\prime })^2}{V^3%
},
\end{equation*}
and 
\begin{equation*}
r_5:=r_m^{(5)}(m) =\frac{V^{\prime \prime \prime }}{V^2} -\frac{6V^{\prime
}V^{\prime \prime }}{V^3} +\frac{6(V^{\prime })^3}{V^4}.
\end{equation*}
If $q=F^{-1}$ and $G=r_m\circ q$, differentiation in the quantile coordinate
can be written as 
\begin{equation*}
\frac{d}{du} =\frac{1}{f(q(u))}\frac{d}{dx}.
\end{equation*}
At $u=1/2$, ET--LR equality gives $q(1/2)=m$. Since $r_1=0$, direct
differentiation gives 
\begin{equation*}
G^{\prime \prime \prime }(1/2) =\frac{1}{f^3}\{r_3-3\ell_1r_2\}.
\end{equation*}
The symmetry $G(u)=G(1-u)$ forces $G^{\prime \prime \prime }(1/2)=0$, and
therefore 
\begin{equation}  \label{eq:app-l1}
\ell_1=-\frac{V^{\prime }}{3V}.
\end{equation}
Because 
\begin{equation*}
\ell_1(m)=\psi(m)+(\log h)^{\prime }(m), \qquad \psi^{\prime }(m)=\frac1V,
\end{equation*}
differentiating \eqref{eq:app-l1} with respect to $m$ gives 
\begin{equation*}
\ell_2 =-\frac1V-\frac{V^{\prime \prime }}{3V}+\frac{(V^{\prime })^2}{3V^2},
\end{equation*}
and one further differentiation gives 
\begin{equation*}
\ell_3 =\frac{V^{\prime }}{V^2} -\frac{V^{\prime \prime \prime }}{3V} +\frac{%
V^{\prime }V^{\prime \prime }}{V^2} -\frac{2(V^{\prime })^3}{3V^3}.
\end{equation*}
}

The fifth derivative can be organized without suppressing the chain-rule
terms. At $u=1/2$, 
\begin{align}
G^{(5)}(1/2) =\frac1{f^5}\big[\,&r_5-10\ell_1r_4 +(35\ell_1^2-10\ell_2)r_3 \\
&+(-5\ell_3+45\ell_1\ell_2-50\ell_1^3)r_2\big].  \label{eq:G5expanded}
\end{align}
Substituting the displayed expressions for $r_2,r_3,r_4,r_5$ and $%
\ell_1,\ell_2,\ell_3$ into \eqref{eq:G5expanded} and collecting terms gives 
\begin{equation*}
G^{(5)}(1/2) =-\frac{2}{27f^5V^4} \left\{ 9V^2V^{\prime \prime \prime
}-9VV^{\prime }V^{\prime \prime }+4(V^{\prime })^3\right\}.
\end{equation*}
Since $G$ is symmetric about $1/2$, $G^{(5)}(1/2)=0$, yielding %
\eqref{eq:ETLR-ODE}.

For completeness, direct differentiation of $V^{2/3}$ gives 
\begin{equation*}
\left(V^{2/3}\right)^{\prime \prime \prime }= \frac{2}{27V^{7/3}} \left\{
9V^2V^{\prime \prime \prime }-9VV^{\prime }V^{\prime \prime }+4(V^{\prime
})^3\right\},
\end{equation*}
so the differential equation is exactly equivalent to $(V^{2/3})^{\prime
\prime \prime }=0$.

{\ }


\begin{thebibliography}{99}
\bibitem{Wasserstein2016} R. L. Wasserstein and N. A. Lazar, The ASA
statement on p-values: context, process, and purpose, \emph{The American
Statistician} \textbf{70} (2016), no. 2, 129--133. DOI:
10.1080/00031305.2016.1154108.

\bibitem{Wasserstein2019} R. L. Wasserstein, A. L. Schirm and N. A. Lazar,
Moving to a world beyond $p<0.05$, \emph{The American Statistician} \textbf{%
73} (2019), sup1, 1--19. DOI: 10.1080/00031305.2019.1583913.

\bibitem{LehmannRomano2022} E. L. Lehmann and J. P. Romano, \emph{Testing
Statistical Hypotheses}, 4th ed., Springer, Cham, 2022. DOI:
10.1007/978-3-030-70578-7.

\bibitem{Birkes1990} D. Birkes, Generalized likelihood ratio tests and
uniformly most powerful tests, \emph{The American Statistician} \textbf{44}
(1990), no. 2, 163--166. DOI: 10.1080/00031305.1990.10475708.

\bibitem{BergerDelampady1987} J. O. Berger and M. Delampady, Testing precise
hypotheses, \emph{Statistical Science} \textbf{2} (1987), no. 3, 317--335.
DOI: 10.1214/ss/1177013238.

\bibitem{Kulinskaya2008} E. Kulinskaya, On two-sided p-values for
non-symmetric distributions, arXiv:0810.2124, 2008.

\bibitem{GibbonsPratt1975} J. D. Gibbons and J. W. Pratt, P-values:
interpretation and methodology, \emph{The American Statistician} \textbf{29}
(1975), no. 1, 20--25. DOI: 10.1080/00031305.1975.10479106.

\bibitem{GeorgeMudholkar1990} E. O. George and G. S. Mudholkar, P-values for
two-sided tests, \emph{Biometrical Journal} \textbf{32} (1990), no. 6,
747--751. DOI: 10.1002/bimj.4710320615.

\bibitem{DunnePawitanDoody1996} A. Dunne, Y. Pawitan and L. Doody, Two-sided
p-values from discrete asymmetric distributions based on uniformly most
powerful unbiased tests, \emph{Journal of the Royal Statistical Society.
Series D (The Statistician)} \textbf{45} (1996), no. 4, 397--405. DOI:
10.2307/2988542.

\bibitem{MudholkarChaubey2009} G. S. Mudholkar and Y. P. Chaubey, On
defining P-values, \emph{Statistics \& Probability Letters} \textbf{79}
(2009), no. 18, 1963--1971. DOI: 10.1016/j.spl.2009.06.006.

\bibitem{BarLevBshoutyLetac2002} S. K. Bar-Lev, D. Bshouty and G. Letac,
Normal, gamma and inverse-Gaussian are the only NEFs where the bilateral
UMPU and GLR tests coincide, \emph{The Annals of Statistics} \textbf{30}
(2002), no. 5, 1524--1534. DOI: 10.1214/aos/1035844987.

\bibitem{BarLevKokonendji2017} S. K. Bar-Lev and C. C. Kokonendji, On the
mean value parametrization of natural exponential families---a revisited
review, \emph{Mathematical Methods of Statistics} \textbf{26} (2017), no. 3,
159--175. DOI: 10.3103/S1066530717030012.

\bibitem{LetacMora1990} G. Letac and M. Mora, Natural real exponential
families with cubic variance functions, \emph{The Annals of Statistics} 
\textbf{18} (1990), no. 1, 1--37. DOI: 10.1214/aos/1176347491.

\bibitem{BarndorffNielsen1997} O. E. Barndorff-Nielsen, Normal inverse
Gaussian distributions and stochastic volatility modelling, \emph{%
Scandinavian Journal of Statistics} \textbf{24} (1997), no. 1, 1--13. DOI:
10.1111/1467-9469.00045.

\bibitem{BarndorffNielsen1978} O. E. Barndorff-Nielsen, \emph{Information
and Exponential Families in Statistical Theory}, John Wiley \& Sons,
Chichester, 1978.

\bibitem{BarLevEnis1986} S. K. Bar-Lev and P. Enis, Reproducibility and
natural exponential families with power variance functions, \emph{The Annals
of Statistics} \textbf{14} (1986), no. 4, 1507--1522. DOI:
10.1214/aos/1176350173.

\bibitem{Fischer2014} M. J. Fischer, \emph{Generalized Hyperbolic Secant
Distributions: With Applications to Finance}, SpringerBriefs in Statistics,
Springer, Berlin--Heidelberg, 2014. DOI: 10.1007/978-3-642-45138-6.

\bibitem{Feller1971} W. Feller, \emph{An Introduction to Probability Theory
and Its Applications, Vol. II}, 2nd ed., Wiley, 1971.

\bibitem{Zolotarev1986} V. M. Zolotarev, \emph{One-Dimensional Stable
Distributions}, American Mathematical Society, 1986.

\bibitem{Nolan2020} J. P. Nolan, \emph{Univariate Stable Distributions:
Models for Heavy Tailed Data}, Springer Series in Operations Research and
Financial Engineering, Springer, Cham, 2020. DOI: 10.1007/978-3-030-52915-4.

\bibitem{PensonGorska2010} K. A. Penson and K. Gorska, Exact and explicit
probability densities for one-sided Levy stable distributions, \emph{%
Physical Review Letters} \textbf{105} (2010), 210604. DOI:
10.1103/PhysRevLett.105.210604.

\bibitem{LetacMattnerPiccioni2018} G. Letac, L. Mattner and M. Piccioni, The
median of an exponential family and the normal law, \emph{Statistics \&
Probability Letters} \textbf{133} (2018), 38--41. DOI:
10.1016/j.spl.2017.10.002.

\bibitem{PiccioniKolodziejekLetac2020} M. Piccioni, B. Ko\l odziejek and G.
Letac, Location and scale behaviour of the quantiles of a natural
exponential family, \emph{ESAIM: Probability and Statistics} \textbf{24}
(2020), 244--251. DOI: 10.1051/ps/2019009.

\bibitem{Jorgensen1987} B. J\o rgensen, Exponential dispersion models, \emph{%
Journal of the Royal Statistical Society, Series B} \textbf{49} (1987), no.
2, 127--145. DOI: 10.1111/j.2517-6161.1987.tb01685.x.

\bibitem{Petrov1975} V. V. Petrov, \emph{Sums of Independent Random Variables%
}, Ergebnisse der Mathematik und ihrer Grenzgebiete, vol. 82,
Springer-Verlag, Berlin--Heidelberg, 1975. DOI: 10.1007/978-3-642-65809-9.

\bibitem{RCore2026} R Core Team, \emph{R: A Language and Environment for
Statistical Computing}, R Foundation for Statistical Computing, Vienna,
Austria, 2026.
\end{thebibliography}
\end{document}